\theoremstyle{plain}
\newtheorem{theorem}{Theorem}[section]
\newtheorem{proposition}[theorem]{Proposition}
\theoremstyle{definition}
\newtheorem{definition}[theorem]{Definition}
\theoremstyle{remark}
\DeclareSIUnit{\nothing}{\relax}
\newcolumntype{V}{>{\centering\arraybackslash}m{0.4\linewidth} }
\providecommand*{\input@path}{}
\g@addto@macro\input@path{{figures/}{./resources/out_imgs/}}%
\acrodef{RNN}[RNN]{recurrent neural network}
\acrodef{CNN}[CNN]{convolutional neural network}
\acrodef{MEA}[MEA]{multi-electrode array}
\acrodef{RGC}[RGC]{retinal ganglion cell}
\acrodef{SDF}[SDF]{signed distance function}
\acrodef{IQM}[IQM]{interquartile mean}
\acrodef{GLM}[GLM]{generalized linear model}
\begin{document}

\twocolumn[
\icmltitle{Spike Distance Function as a Learning Objective for Spike Prediction}

\begin{icmlauthorlist}
\icmlauthor{Kevin Doran}{lifesci,informatics}
\icmlauthor{Marvin Seifert}{lifesci}
\icmlauthor{Carola A. M. Yovanovich}{lifesci}
\icmlauthor{Tom Baden}{lifesci,ior}
\end{icmlauthorlist}

\icmlaffiliation{lifesci}{School of Life Sciences, University of Sussex, UK}
\icmlaffiliation{informatics}{School of Engineering and Informatics, University of Sussex, UK}
\icmlaffiliation{ior}{Institute of Ophthalmic Research, University of Tübingen, Germany}

\icmlcorrespondingauthor{Kevin Doran}{k.a.doran@sussex.ac.uk}

\icmlkeywords{machine learning, ICML, neuroscience, spike prediction}

\vskip 0.3in
]

\begin{NoHyper}
\printAffiliationsAndNotice{}
\end{NoHyper}

\begin{abstract}
Approaches to predicting neuronal spike responses commonly use a Poisson learning objective. This objective quantizes responses into spike counts within a fixed summation interval, typically on the order of \numrange{10}{100} milliseconds in duration; however, neuronal responses are often time accurate down to a few milliseconds, and Poisson models struggle to precisely model them at these timescales. We propose the concept of a spike distance function that maps points in time to the temporal distance to the nearest spike. We show that neural networks can be trained to approximate spike distance functions, and we present an efficient algorithm for inferring spike trains from the outputs of these models. Using recordings of chicken and frog retinal ganglion cells responding to visual stimuli, we compare the performance of our approach to that of Poisson models trained with various summation intervals. We show that our approach outperforms the use of Poisson models at spike train inference.
\end{abstract}

\section{Introduction} 
This paper proposes a new learning objective for the problem of spike prediction. Spike prediction is the task of estimating the timing of future action potentials (spikes) of a neuron; for example, given \qty{1000}{\ms} of stimulus and spike activity, predict the next \qty{80}{\ms} of spike activity. The task is illustrated in Figure \ref{fig:task}. 

The design of retinal prosthetics is an example where predicting spike sequences is important: patients suffering from diseases such as retinitis pigmentosa can have their vision partially restored through electrical or optogenetic stimulation that attempts to reproduce the spike sequences of \acp{RGC} in a healthy retina \citep{bordaAdvancesVisualProstheses2022, sahelPartialRecoveryVisual2021}.

\begin{figure}
    \centering
    \captionsetup{type=figure}
    \def\svgwidth{0.8\textwidth}
    \includegraphics[width=\columnwidth]{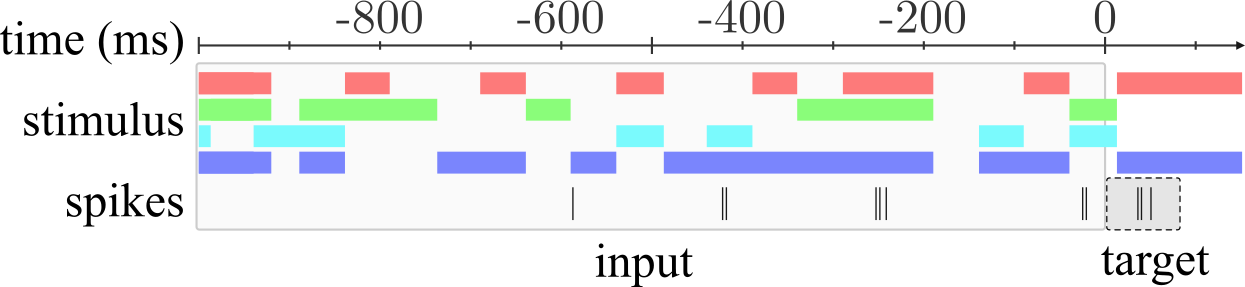}
    \captionof{figure}{\textbf{Spike prediction task:} predict future spikes given stimulus and spike history. For this work, stimulus history is a 1-second snippet of a widefield stimulus at four wavebands (\qty{420}{nm}, \qty{480}{nm}, \qty{505}{nm} and \qty{630}{nm}). The prediction duration is not fixed—several durations are explored. Information about the stimulus after $t=0$ is not available to the model, matching the task faced by retinal prosthetic devices.} 
  \label{fig:task} 
\end{figure}

A widely used approach to spike prediction is to train a model to predict the number of spikes that will occur within a fixed time interval. A probabilistic motivation is often given by describing the neuron's output as a Poisson process. In this setting, the model's output is interpreted as a firing rate and corresponds to the single parameter of the Poisson process at a certain point in time. Training the model amounts to maximizing the likelihood of this time-varying parameter with respect to the data. 

For the Poisson approach, the interval length over which spikes are summed is a hyperparameter that impacts the training and inference of a model. As will be described in Section \ref{sec:poisson}, the desire to reach millisecond resolution encourages shorter summation intervals; however, as the summation interval is reduced, training becomes more difficult, and both the effects of binning and the shape of the Poisson distribution can limit performance. 

\begin{figure*}[t]
    \centering
    \includegraphics[width=\textwidth]{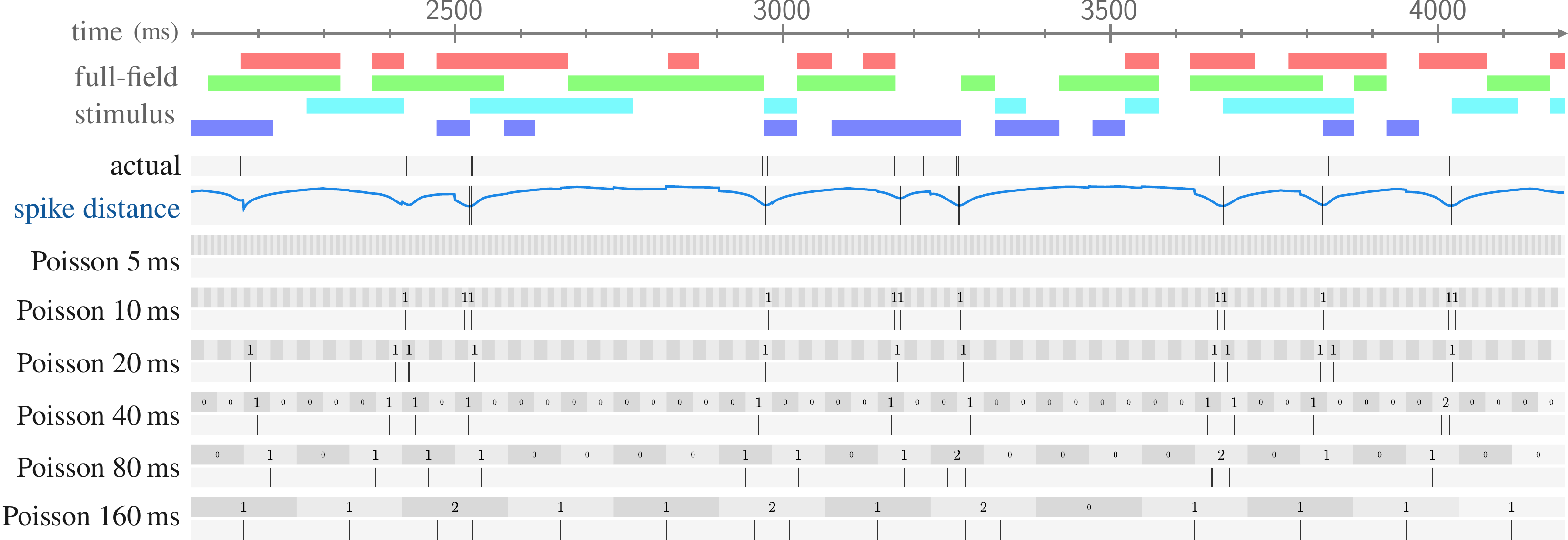}
    \captionof{figure}{\textbf{Spike prediction using spike distance.} The output of a spike distance model (depicted by the blue line) can be used to precisely predict individual spikes. Section \ref{sec:inference_dist} describes the algorithm used for inferring spikes from a spike distance array. The Poisson models have their outputs rounded and that number of spikes are tiled over the interval. With this inference strategy, a fixed summation interval introduces a trade-off: longer intervals prevent precise spike localization, while shorter intervals cause predictions to tend towards 0 for all time steps. Section \ref{sec:inference_poisson} and \ref{sec:inference_poisson2} describe other approaches for inferring spikes from outputs of Poisson models.} \label{fig:rollout}
\end{figure*}

In this work, we propose a summation-free approach. Rather than reduce a chunk of a spike train to a single number, we choose a representation that preserves its details. Inspired by the idea of a signed distance function used in geometric modelling, we propose the idea of a \emph{spike distance function} that maps points in time to the temporal distance to the nearest spike. We train neural networks to output this representation and use it to predict spike trains. A demonstration of the spike distance approach (still imprecisely defined for now) compared to the Poisson approach is shown in Figure \ref{fig:rollout}. The process of predicting spikes is shown in Figure \ref{fig:pipeline_A}. 

\begin{figure*}[t]
    \centering
    \includegraphics[width=\textwidth]{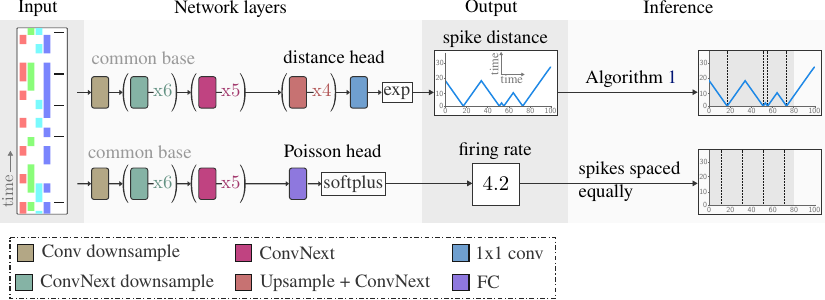}
    \captionof{figure}{\textbf{Overview} of spike prediction for both the spike distance approach and the Poisson approach. To make the two comparable, the networks used by each approach were designed with the preponderance of weights and computation carried out using the same twelve-layer base architecture. ConvNext refers to the layers introduced by \citet{liuConvNet2020s2022}. Algorithm \ref{alg:spike_inference} is described in Section \ref{sec:inference_dist}.}
    \label{fig:pipeline_A}
\end{figure*}

We inspect the effectiveness of this new approach by comparing a neural network trained using the spike distance objective with neural networks trained using a range of Poisson objectives, each using a different summation interval. Comparison is based on models' ability to infer spike trains: we predict spike trains from model outputs and compare these against ground truth spike trains using spike train similarity measures. 

Our spike prediction dataset is formed from a \ac{MEA} recording of 60 chicken \acp{RGC} responding to visual stimuli. We repeat the experiments for 113 frog \acp{RGC} from a separate recording. 

In the next section, we present related work. We come back to set the scene more concretely by elucidating the drawbacks of the Poisson approach in Section \ref{sec:poisson} and describing the spike distance function in Section \ref{sec:distance}. How to infer spike trains from model outputs is covered in Section \ref{sec:inference}. The experiments are described in Section \ref{sec:experiments} with results appearing in Section \ref{sec:results}, and a discussion follows in Section \ref{sec:discussion}. 

\section{Related Work} \label{sec:related}
For the task of spike prediction, the effectiveness despite simplicity of \acp{GLM} makes them important models to study. Using \acp{GLM} to model neurons is covered by \citet{weberCapturingDynamicalRepertoire2017}. \acp{GLM} are further relevant because their success has contributed to the popularity of the Poisson learning objective. 

\acp{GLM} are effectively single-layer neural networks, and so the neural network approaches that follow can be thought of as trading the simplicity of \acp{GLM} with the increased power of deeper neural networks. \citet{mcintoshDeepLearningModels2016} used \acp{CNN} and \acp{RNN} to predict spiking behaviour of tiger salamander \acp{RGC}. In parallel, \citet{battyMultilayerRecurrentNetwork2016} carried out similar work using \acp{RNN} to predict spiking behaviour of macaque \acp{RGC}. \citet{gogliettinoModelingResponsesMacaque2024a} used \acp{CNN} to predict responses of human \acp{RGC}. These three studies recorded the cells using \acp{MEA}. \citet{cadenaDeepConvolutionalModels2019} compared \acp{CNN} trained from scratch with a fine-tuned vision model at predicting spiking behaviour of macaque V1 neurons. This study recorded cells using penetrative probes. 
In all of these studies, models have a single output which is interpreted as a parameter of a probability distribution modelling the spike count in an interval. 

At a finer scale, spike rates are no longer a suitable abstraction, and spike prediction becomes an example of the general problem of locating points in 1 or more dimensions or locating surfaces in 2 or more dimensions and so on, of which there are other applications. Earthquake prediction faces such a problem where the task is framed as predicting the ``time to failure''. Computer vision and graphics have an analogous concept—a signed distance function—which is effectively a ``distance to surface'' and is used for implicitly modelling 2D surfaces in 3D space. It is from these contexts that we draw inspiration to propose ``distance to spike'' as an effective tool for spike prediction. 

For earthquake prediction, \citet{rouet-leducMachineLearningPredicts2017a} and \cite{wangPredictingFutureLaboratory2022} applied machine learning to predict time-to-failure for a laboratory fault system. See the review by \cite{renChapterTwoMachine2020} for broad coverage. Interestingly, these approaches stop at predicting time-to-failure signals and do not take the next step of inferring precise rupture times from these signals. The inference procedure described in Section \ref{sec:inference} will work with these signals also.

In computer vision, the notion of a signed distance function is used to implicitly define surfaces. Our work to predict spike distance is analogous to the work by \citet{zeng3DMatchLearningLocal2017a} and \citet{daiShapeCompletionUsing2017} who used neural networks to represent 3D shapes using a grid of truncated signed distance values. Later work in this area has tackled the scaling issues with 3D data, such as the work by \citet{parkDeepSDFLearningContinuous2019b}, where neural networks were given the additional job of querying the 3D representation. The directional queries of the 3D representation are conceptually similar to spike-stimulus history input that we will use to request output behaviour from our neural network model. Although not considered in this work, the creative forms of querying 3D representations, such as \citeauthor{sitzmannLightFieldNetworks2021a}'s (\citeyear{sitzmannLightFieldNetworks2021a}) use of Pl{\"u}cker coordinates, raise the question: are there other useful ways to query neural network representations for neuronal behaviour?

Instead of predicting spikes in an interval, an alternative is to predict the time until the next spike. The area of neural temporal point processes tackles this problem for arbitrary event sequences by outputting a probability distribution over the next event time, given an event history. Two reviews of this area are \citet{shchurNeuralTemporalPoint2021a} and \citet{bosserPredictiveAccuracyNeural2023}. A drawback of this approach is that inference requires a forward pass through the neural network after each spike, which becomes impractical for applications such as retinal prosthetics, as spikes of a single cell can be separated by a few milliseconds, and across cells, spikes can be recorded as coincident. Additionally, the stimulus stream strongly affects responses and quickly invalidates any prediction made beyond a short future interval, making next-spike prediction less suitable in this setting.

Outside the context of neural networks, other works have investigated non-Poisson approaches to modelling spike emission. For example, both the work of \citet{tehGaussianProcessModulated2011} and \citet{liuBayesianNonparametricNon2023} model spike emission with renewal processes modulated by Gaussian processes. These works support the idea that in the context of neural networks for spike prediction, alternatives to the incumbent Poisson approach are worth exploring.

\begin{figure*}[ht]
  \centering
  \includegraphics[width=\linewidth]{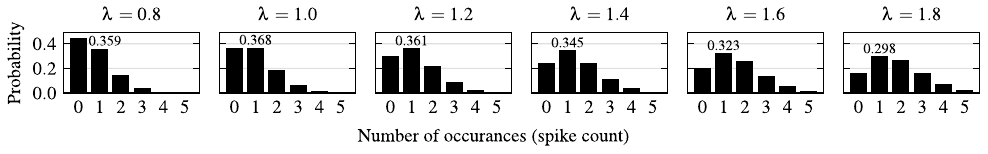}
  \caption{The probability mass function of the Poisson distribution for six values of the distribution's parameter, $\lambda$. The probability assigned to $1$ is labelled. For a single non-zero value such as $1$, no $\lambda$ allows a concentration of probability at that value. The distribution is unsuitable as part of a learning objective where a model is expected to express strong confidence in any single non-zero value.}
\label{fig:poisson_range}
\end{figure*}

\section{Issues With the Poisson Approach} \label{sec:poisson}
The summation interval used in the Poisson approach to spike prediction leads to a trade-off between temporal resolution and effective training and inference.

The goal of having predictions with high temporal resolution encourages the choice of a shorter summation interval. Recordings of chicken \acp{RGC} show bursts of spikes with spike intervals as short as \numrange[range-phrase=--]{1}{2} \si{ms} \citep{seifertBirdsMultiplexSpectral2023}; and primate, salamander, cat and rabbit \acp{RGC} have been shown to have stimulus-driven variability across trials as low as \qty{1}{\ms} \citep{uzzellPrecisionSpikeTrains2004, berryRefractorinessNeuralPrecision1998, keatPredictingEverySpike2001, berryStructurePrecisionRetinal1997}. Neurons from other brain regions also demonstrate this precision, such as neurons in the lateral geniculate nucleus of cats \citep{buttsTemporalPrecisionVisual2011, keatPredictingEverySpike2001}. Furthermore, there is evidence that this precision is linked to function: retinal ganglion cells have been shown to vary their response time to stimuli on the order of milliseconds depending on the spatial structure of stimuli \citep{gollischRapidNeuralCoding2008} and the wavelength of stimuli \citep{seifertBirdsMultiplexSpectral2023}.

This suggests that summation intervals as short as \qty{1}{ms} in duration would be useful. However, decreasing the summation interval can cause several issues. Training slows and becomes more difficult as data becomes spread over an increased number of samples, most of which are empty of spikes. If carrying out inference by maximizing probability, zero spikes can become the most compatible prediction for all time steps, as short summation intervals can lead to model outputs being consistently close to zero. These issues are ameliorated with longer intervals, but at the cost of temporal resolution.

With a long interval of say \qty{80}{ms}, \qty{80}{ms} of neural activity will be compressed into a single scalar value—a quick burst of spikes a few milliseconds apart will be indistinguishable from the same number of spikes spread over the entire interval. At training time, this results in a loss signal that has no sensitivity to how spikes fall within the interval but is overly sensitive to subtle movements of spikes crossing the interval boundary. At inference time, a model's scalar output does not indicate the positions where spikes might occur, despite it being plausible that a capable model would estimate this information in some form before arriving at a spike rate. 

The above issues will affect any setup where training and inference works with isolated bins of spike counts. Using a Poisson distribution has an additional issue in that its shape is restrictive. The distribution's single parameter is simultaneously the mean and variance of the distribution, which does not allow a model to express strong confidence in any specific non-zero spike count. For example, the largest probability assignable to the occurrence of 1 spike is $36.8\%$, which occurs when the Poisson's $\lambda$ parameter is set to $1.0$; however, at this $\lambda$, $36.8\%$ is also assigned to the occurrence of 0 spikes. Figure \ref{fig:poisson_range} demonstrates this situation by displaying the Poisson distribution for a range of $\lambda$ values between $0.8$ and $1.8$. This becomes an issue at low summation intervals as the preponderance of spike counts will be 0 or 1. 

The exponential distribution is the inter-event interval equivalent of the Poisson distribution. When modelling inter-event intervals, it is common to deviate from the exponential distribution, for example by using self-exciting Hawkes processes \citep{hawkesSpectraSelfExcitingMutually1971}. This supports the idea that it is reasonable to seek flexibility beyond what the Poisson distribution offers.

With spike prediction in mind, it would be desirable to have a loss function that changes proportionately to changes in spike position and to have an output representation that facilitates spike position inference. Furthermore, a model should not be hindered from expressing strong confidence in specific spike patterns. The spike distance function described next can be used to create a learning objective with these properties.

\section{Spike Distance} \label{sec:distance}
A \emph{Spike distance function} maps each point in time to a scalar representing the temporal distance to the nearest spike. This is either the time elapsed since the last spike or the time remaining until the next spike, whichever is less. This is an implicit representation of a set of spikes using contours. A comprehensive treatment of using implicit functions to represent points and surfaces is \citet{osherLevelSetMethods2003}.

Formally, let $S \in \mathbb{R}^N$ be a set of $N$ spike times. Let $I = [a, b]$ be an interval of $\mathbb{R}$. The spike distance function of $S$ on $I$ is the function $f_S: I \rightarrow \mathbb{R}$ defined as:
\[
  f_S(t) = \min_{s \in S} |t - s|
\]

Any interval can support a spike distance function, and this spike distance function can depend on spikes both inside and outside the interval. An example spike distance function evaluated over the interval $[\qty{0}{ms}, \qty{128}{ms}]$ for the spikes $[\qty{20}{ms}, \qty{60}{ms}, \qty{65}{ms}, \qty{86}{ms}]$ is shown in Figure \ref{fig:spikedistance}. 

\begin{figure}[H]
  \centering
  \includegraphics[width=\columnwidth]{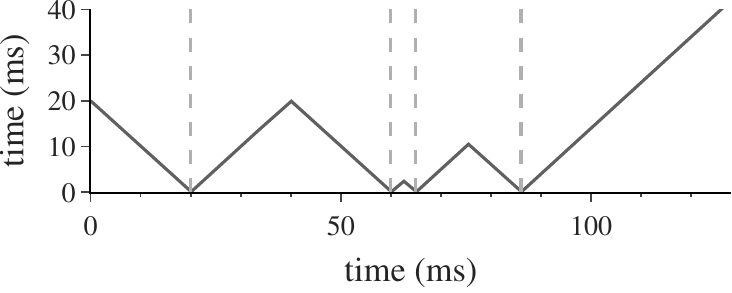}
  \caption{The spike distance function in milliseconds over the interval $[0, 128]$ for the spikes $[20, 60, 65, 86]$.}
  \label{fig:spikedistance}
\end{figure}

From the example, it can be seen that subtle or large changes to a spike train will be reflected proportionately in the spike distance function, which was the characteristic lacking in the Poisson loss signal.

In practice, we will work with discrete arrays and use a \textit{discrete spike distance function}, $\hat{f}_S : \mathbb{N} \to \mathbb{R}$, where $S$ is now an array of spike counts. We assume that the sample rate is fast enough that no two spikes are recorded in the same bin, and so $S$ can be treated as a binary array. Appendix \ref{app:discrete_distance} covers the discrete spike distance in more detail, including generalizing it to allow multiple spikes to be recorded in a single sample. Two other considerations—the choice of a maximum distance and the choice of the evaluation interval—are discussed in Appendix \ref{app:spike_distance}.

\section{Spike Train Inference}  \label{sec:inference}
For the spike prediction task, we will infer a spike train from a model's outputs. For both the spike distance model and the Poisson models this will be done autoregressively: concatenating model outputs and using previously predicted spikes as input to subsequent forward passes. The two approaches differ in how they carry out a single step—each is covered below.

\subsection{Spikes From a Spike Distance Array} \label{sec:inference_dist}
We describe spike inference from a spike distance array from the perspective of energy minimization. An array of spike distance values outputted by a model can be seen as parameterizing some function that assigns a scalar energy value to any set of spikes. Various functions are possible. We use the L2 norm between the model output and the discrete spike distance function of a candidate spike sequence. Let the model output $M = (m_i)_{i=0}^{L-1}$ be a sequence of reals of length $L$, and let the binary sequence $S=(s_i)_{i=0}^{L-1}$ be a candidate spike count sequence of the same length. For example, with $L = 5$, the model output could be $M = [1.1, 0.5, 0.9, 1.1, 0.6]$ and a candidate spike sequence count be $S = [0, 1, 0, 0, 1]$.  The energy assigned to $S$ is given by:

\begin{equation}
  E_M(S) = \sum_{i=0}^{L-1} (\hat{f}_S(i) - m_i)^2 
  \label{eq:energy}
\end{equation}

\begin{figure}[h]
  \centering
  \includegraphics[width=\columnwidth]{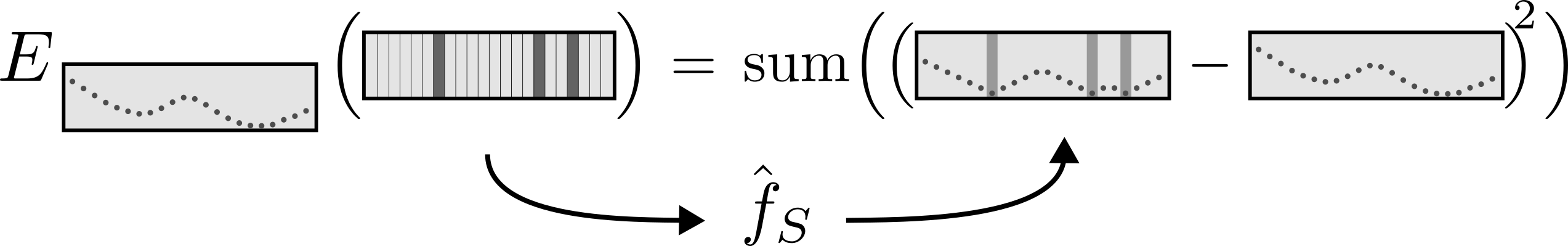}
  \caption{Visual form of Equation \ref{eq:energy}. The discrete spike distance, $\hat{f}_S$, is described further in  Appendix \ref{app:discrete_distance}.}
\label{fig:energy_eq}
\end{figure}

The energy is a measure of incompatibility between the candidate spikes and the model output (here we are following the convention that low energy values correspond to greater compatibility). Spike train inference amounts to finding the spike train with the lowest energy. 

This optimization problem can be solved exactly by brute-force search. Dynamic programming can also be used, as the problem can be decomposed into left and right sub-problems. Both of these approaches were found to be too slow in practice. We propose an inexact solution, Algorithm \ref{alg:spike_inference}, that begins by predicting a spike in every time step then iteratively refines the prediction by removing spikes.

\begin{algorithm}[H]
\caption{Spike inference algorithm: predict a spike in every time step then iteratively remove spikes—remove if the L2 norm between the candidate and target distance arrays is lower when the spike is not present. Iterate spikes in order of the estimated effect on the error. The function $\textsc{SpikeDist}$ converts a set of spikes into an array of spike distances (see Appendix \ref{app:discrete_distance} for details).}

\label{alg:spike_inference}

\begin{algorithmic}[1] %
\REQUIRE{$w$ is the target spike distance (model output) of length $L$ and $s_0$ is the most recent ``past'' spike.}
\FUNCTION{\textsc{SpikeInference}($w, s_0$)} 
\STATE $S \gets \text{indices of } w \;\; (\text{i.e. } [0, 1, ..., L-1]) $
    \STATE $\textit{scores} \gets w$
    \REPEAT
    \STATE $S_{\textit{old}} \gets S$
    \FORALL {$s \in S$ sorted descending by $\textit{scores}[s]$}
      \STATE $err \gets \textsc{L2Norm}(w, \textsc{SpikeDist}(S, s_0)\,)$
          \STATE $S' \gets S \text{ with } s \text{ removed}$
          \STATE $err' \gets \textsc{L2Norm}(w, \textsc{SpikeDist}(S', s_0)\,)$
          \STATE $\delta \gets err - err'$
          \STATE $\textit{scores}[s] \gets \delta$
          \IF{$\delta > 0$}
              \STATE $S \gets S'$
          \ENDIF
      \ENDFOR
      \UNTIL{$S = S_{\textit{old}}$}
  \STATE \textbf{return} $S$
\ENDFUNCTION
\end{algorithmic}
\end{algorithm}

In terms of time complexity, applying this algorithm autoregressively to infer a spike train $L\,\text{bins}$ long, broken into $N$ steps of $l\,\text{bins}$ uses $\sim N l^2 \log l$ compares in the worst case—this corresponds to an $l \log l$ cost of sorting that may be repeated $l$ times each inference step, of which there are $N$. Except in pathological cases, the outer loop of \textsc{SpikeInference} requires very few iterations—for the combined 90 minutes of spike train inference carried out on the chicken test set, no inference step required more than two iterations. Thus, with comparisons as a cost model, the algorithm's average cost is $\sim N l \log l$: growing linearly with the number of inference steps ($N$) and linearithmic with the resolution ($l$) of a single inference step. 
Figure \ref{fig:inference_perf} in Appendix \ref{app:comp_resources} shows inference running times on the authors' hardware. The algorithm uses additional memory that grows linearly with $l$.

\subsection{Spike Counts From a Poisson Distribution} \label{sec:inference_poisson}
A Poisson distribution has a single parameter $\lambda$. A model trained with a Poisson learning objective outputs a single scalar value, $y$, which is used to parameterize a Poisson distribution by setting $\lambda = y$. From such a parameterized Poisson distribution, we consider three ways to infer spike counts: maximizing probability, sampling, and approximating the mean.

\textbf{Maximizing probability} involves selecting the spike count with maximum probability under the Poisson distribution with parameter $\lambda=y$. By nature of the Poisson distribution, this value—the mode of the distribution—is $\lfloor y \rfloor$.

\textbf{Sampling} a spike count from the Poisson distribution with parameter $\lambda=y$ is another approach to inferring a spike count. Instead of selecting the most likely spike count, the distribution is sampled to produce a random spike count. 

\textbf{Approximating the mean} is an alternative whereby the model's output is rounded to produce a spike count prediction. This approach is a less principled approach; however, empirically, it outperforms sampling or using the mode. Indeed, the overall best-scoring spike trains using Poisson models were produced with this approach at a summation interval of \qty{80}{\ms}. It is hypothesized but not tested that the reason for the rounded mean being more effective than the mode is that the mean is always larger and so will take the value zero less often in the case where the Poisson distribution skews towards zero. 

The main results are reported using both sampling and the rounded mean strategy. Inference by maximizing probability (using the mode) performs significantly worse than using the rounded mean and is reported in Appendix \ref{app:floor}.

\subsection{Spikes From a Spike Count} \label{sec:inference_poisson2}
The Poisson distribution gives no information other than the spike rate, so a reasonable strategy for spike placement must be chosen. In this work, given an inferred spike count $n$, we tile $n$ spikes uniformly over the interval. In experiments not reported here, stochastic spike placement was also tested—by sampling $n$ times from a uniform distribution; however, this approach performed notably worse than uniform tiling. 

\begin{figure*}[ht]
  \centering
  \includegraphics[width=\linewidth]{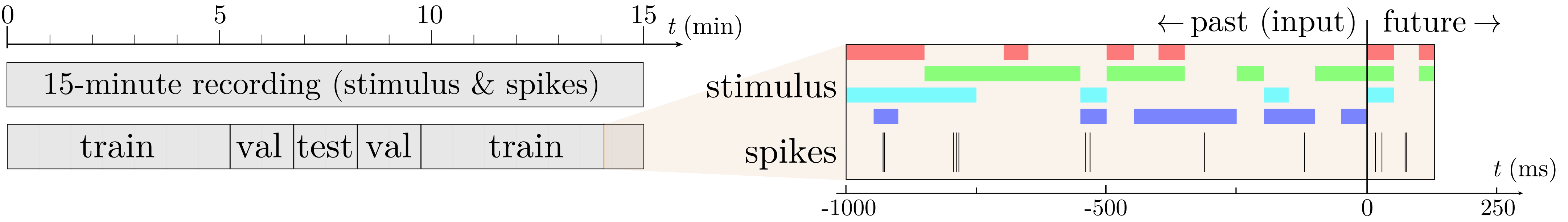}
  \caption{\textbf{Left:} A 15-minute retina recording, split into training, validation and test subsets. \textbf{Right:} A single \textasciitilde 1.1-second snippet taken from the training subset. ${\small t=0}$ separates the snippet into the ``past'' data, available to the model, and the ``future'' data, for which models will be expected to make predictions. In our experiments, the length of the past data is always 1 second (992 samples); the length of the future data depends on the model; for example, the Poisson-\qty{5}{ms} model with a 5-sample summation interval will use 5 samples (\textasciitilde\qty{5}{\ms}).}
\label{fig:data}
\end{figure*}

\section{Experiment Settings} \label{sec:experiments}
A spike distance model was compared to a family of 6 Poisson-based models, each differing by the length of their summation interval: \qty{5}{\ms}, \qty{10}{\ms}, \qty{20}{\ms}, \qty{40}{\ms}, \qty{80}{\ms} and \qty{160}{\ms}. Below we describe the dataset, models, training procedure and evaluation procedure.

\subsection{Dataset} \label{sec:dataset}
Model input was a (5, 992)--shaped array corresponding to 1 second of history: a 4-channel stimulus and a single spike train, both sampled at \qty{992}{Hz}. The data came from a 15-minute recording of chicken \acp{RGC} exposed to full-field colour noise, recorded by \citet{seifertBirdsMultiplexSpectral2023}. The recording contains 154 cells. Models were trained separately for each cell, so each cell can be considered a separate 15-minute dataset. The 15 minutes was split according to the ratio (7,2,1) into training, validation and test sets. Appendix \ref{app:recording} describes the recording in more detail. 

Our computational budget restricted how many cells we could work with—training time was considered too long for more than 60 cells (60 cells requires 500 hours of training, as described in Appendix \ref{app:comp_resources}). A principled way to choose cells is to focus on cells with more spikes. The evaluation metrics that will be introduced in Section \ref{sec:evaluation} become less informative at low spike counts—the spike train metrics need spikes to operate on, and at low spike counts, it is difficult to distinguish models from a naive model that outputs an empty spike train. We consider only cells that contain an average spike rate greater than 0.75 spikes per second in the training set. This threshold resulted in 60 of 154 cells being used. Cells that meet this criterion will typically have more than 65 spikes in the 90-second test set (the exact number depends on their spike rate in the test set). Appendix \ref{app:dataset} describes the construction of the dataset in more detail.

\subsection{Models} \label{sec:models}
For each of the 60 cells, 7 models were trained: 1 spike distance model and 6 Poisson models. Each Poisson model used a different summation interval length: 5, 10, 20, 40, 80 and 160 bins. As the duration of 1 bin (\qty{1.0008}{\ms}) is approximately \qty{1}{\ms}, these models are referred to as Poisson-\qty{5}{\ms}, Poisson-\qty{10}{\ms}, and so on.

A single architecture is not used for both the spike distance objective and the Poisson objective, as the former requires output in the form of an array and the latter a scalar; instead, two neural networks are used. These were designed towards maximizing our ability to compare the two learning objectives. An overview of the architectures is shown in Figure \ref{fig:pipeline_A}. The two architectures were designed around a shared base architecture that accounts for the preponderance of parameters and compute. A \ac{CNN} using ConvNext blocks described by \citet{liuConvNet2020s2022} forms the shared base. The Poisson architecture then ends in a fully-connected layer, whereas the spike distance architecture's head consists of 4 upsampling ConvNext blocks. There is no weight sharing employed for training on multiple cells—the models are trained end-to-end for each cell individually. The alternative of training on all cells at once by adding per-cell modulation to the network was avoided, as the outputs of the two architectures are very different in nature, and any single modulation scheme may favour one over the other. Further details on the models are described in Appendix \ref{app:models}.

One concern addressed is layer count. The spike distance architecture has additional layers in its head, and we wished to rule out this difference leading to a performance advantage. This concern was ameliorated by optimizing the number of layers in the shared base architecture to maximize Poisson performance. This architecture search over layer count is described in Appendix \ref{app:arch_search}. 

\begin{figure*}
\begin{subfigure}[t]{\columnwidth}
  \centering
  \includegraphics[width=0.93\linewidth]{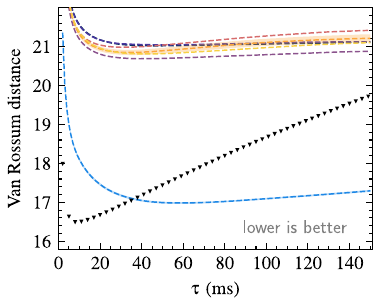}
  \bigbreak
  \includegraphics[width=0.93\linewidth]{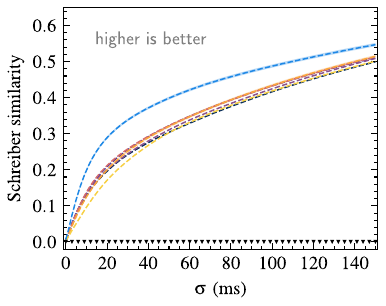}
  \bigbreak
  \includegraphics[width=0.93\linewidth]{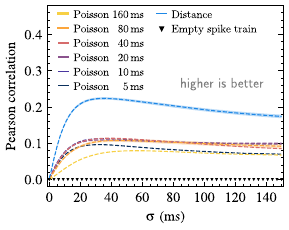}
  \caption{Poisson models use inference via sampling}
  \label{fig:sampling_xsigma}
\end{subfigure}
\hfill
\begin{subfigure}[t]{\columnwidth}
  \centering
  \includegraphics[width=0.93\columnwidth]{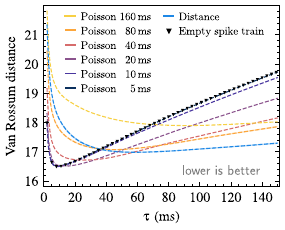}
  \bigbreak
  \includegraphics[width=0.93\columnwidth]{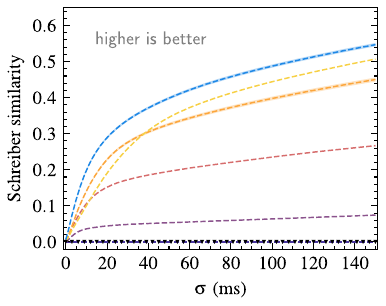}
  \bigbreak
  \includegraphics[width=0.93\columnwidth]{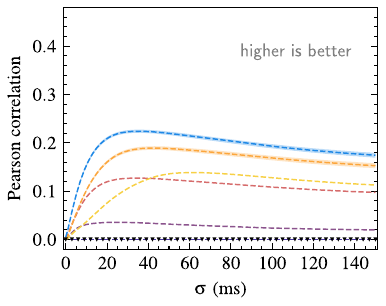}
  \caption{Poisson models use inference via the rounded mean}
\label{fig:round_xsigma}
\end{subfigure}
\caption{Performance comparison between the spike distance model and the Poisson models at spike prediction. For the Poisson models, two spike count inference strategies are considered: \textbf{(left)} sampling and \textbf{(right)} using the rounded mean. Three metrics are considered: \textbf{(top)} Van Rossum distance, \textbf{(middle)} Schreiber similarity and \textbf{(bottom)} Pearson correlation between the ground truth and output spike trains, reported as interquartile mean over the 60 chicken cells. The metrics are evaluated for a range of their smoothing parameters, ($\tau$ and $\sigma$), from 0 to 150. 95\% confidence intervals are included for the Poisson-\qty{80}{\ms} and spike distance models. Scores for the zero spike spike train are included for comparison, highlighting that the metrics become less effective at comparing models at low smoothing levels—in particular, the Van Rossum distance eventually prefers the zero spike train for sufficiently low $\tau$ values. See Figure \ref{fig:mxm} in Appendix \ref{app:mxm} for a visualization of the variability across cells at a specific smoothing value, $\tau=\sigma=60$.}
\label{fig:xsigma}
\end{figure*}

\subsection{Training}
All models were trained for 80 epochs using the AdamW optimizer with the 1-cycle learning rate policy \citep{smithCyclicalLearningRates2017}. Loss for the spike distance model was mean squared error between the model output and the log of the target spike distance. Loss for 
Poisson models was negative log-likelihood with respect to the target Poisson distribution. The $\lambda$ parameter of the target distribution is the count of spikes within the summation interval. Final models were chosen based on the lowest validation loss across the 80 epochs. Training hyperparameters are described further in Appendix \ref{app:hyperparameters}. Total training times are summarized in Appendix \ref{app:comp_resources}.

\subsection{Evaluation} \label{sec:evaluation}
Once trained on a cell, a model was used to predict a spike train using the 90-second test data. Each inferred spike train was compared to the ground truth spike train using three metrics for spike train similarity/dissimilarity: Van Rossum distance \citep{vanrossumNovelSpikeDistance2001}, Schreiber similarity \citep{schreiberNewCorrelationbasedMeasure2003} and Pearson correlation. Appendix \ref{app:metrics} covers these metrics in more detail. Schreiber similarity and Van Rossum distance were chosen because they are frequently used and easy to explain. Pearson correlation is reported so as to have a common metric with the work of \citet{mcintoshDeepLearningModels2016} and \citet{gogliettinoModelingResponsesMacaque2024a}. We refer to \citet{paivaComparisonBinlessSpike2010} and \citet{sihnSpikeTrainDistance2019} for a comparison of different spike train measures. 

The spike train metrics are each parameterized by a smoothing parameter. For Van Rossum distance and Schreiber similarity, their parameters—$\tau$ and $\sigma$ respectively—are integral to their definition. We augment the Pearson correlation by prepending a smoothing step copied from the Schreiber similarity. The comparison of spike trains is sensitive to the degree of smoothing, and different models perform best at different smoothing ``sweet spots''. There is a scale from instantaneous spikes to average spike rate. No single point on this scale can capture all that matters about a neuron's response, so it is important to investigate spiking behaviour over a range of timescales. We evaluate the three metrics at a range of smoothing values, from 0 to 150.

\section{Results} \label{sec:results}
Training and evaluating a model 60 times, once for each cell, is considered a run. 11 runs were performed for both the spike distance model and the Poisson-\qty{80}{\ms} model (the most competitive model). All other models were evaluated for a single run. We follow the recommendation of \citet{agarwalDeepReinforcementLearning2021} and report aggregate performance for a model using \ac{IQM} across cells and runs; and for the two models with multiple runs, uncertainty is estimated with stratified bootstrap confidence intervals. 

Figure \ref{fig:xsigma} records the three metrics for all models. Both sampling (Figure \ref{fig:sampling_xsigma}) and rounded mean (Figure \ref{fig:round_xsigma}) are used for inference with the Poisson models. An additional perspective on the variability across cells is provided by Figure \ref{fig:mxm} (Appendix \ref{app:mxm}) which zooms in on a single smoothing parameter. The results when the Poisson models use inference via the mode are presented in Figure \ref{fig:floor_xsigma} (Appendix \ref{app:floor}). For a qualitative comparison of inference strategies for the Poisson models, the snippet from Figure \ref{fig:rollout} is repeated in Appendix \ref{app:rollout} for inference using sampling (Figure \ref{fig:rollout_sampling}) and the distribution mode (Figure \ref{fig:rollout_floor}).

The experiment was repeated for a recording of frog \acp{RGC}, and the results for this data are presented in Appendix \ref{app:xenopus}.

\clearpage

\section{Discussion} \label{sec:discussion}
An effective model should perform well over a wide range of the metrics' smoothing parameters, and in this regard, the spike distance model outperforms the Poisson models in all three metrics. 

For the Poisson models, the spike inference strategy has a major effect on the quality of the spike trains produced. When using inference via sampling (Figure \ref{fig:sampling_xsigma}), all Poisson models perform similarly but are outperformed by the spike distance model, and considerably so in terms of Van Rossum distance. Inference via the rounded mean allows a more deterministic placement of spikes; however, as the summation interval is reduced, models begin to approach the performance of the zero spike train model due to model outputs being rounded to zero. Eventually, the model with the shortest summation interval, Poisson \qty{5}{\ms}, has performance indistinguishable from that of the zero spike train. When inference uses the distribution's mode, (Figure \ref{fig:floor_xsigma} in Appendix \ref{app:floor}) this effect is exacerbated, and both Poisson \qty{20}{\ms} and Poisson \qty{10}{\ms} also have performance indistinguishable from that of the zero spike train.

The Poisson \qty{80}{\ms} model using inference via rounded mean is the overall best-performing Poisson model. It can be understood to sit in a sweet spot where the summation interval is long enough to be less affected by the rounding to zero effect but short enough to have good temporal resolution. This result suggests that the less principled approach of inferring spike counts by rounding the model output should be considered when choosing a spike count inference strategy. If one is unable to experiment with various summation intervals, using sampling can give a higher lower-bound performance across summation intervals at the cost of incurring a lower peak performance. 

The evaluation settings that least favour the distance model are the Van Rossum distances at low smoothing levels, where the Van Rossum distance prefers Poisson models with short summation intervals using the mode or rounded mean for inference. This is explained by these models predicting very few spikes and having a performance similar to the empty spike train model. For Van Rossum distance, there will always be a $\tau$ value below which predicting zero spikes performs best. At low smoothing values, inexact spikes are penalized twice—once for missing the correct spike and once for a spike at an incorrect time. \citet[p.~408]{paivaComparisonBinlessSpike2010} describes this as Van Rossum distance acting as a co-incidence detector at low $\tau$ values. When comparing models via the Van Rossum distance, the $\tau$ values where the empty spike train performs best are less informative, and above these $\tau$ values, it is the spike distance model that is most competitive.

\section{Limitations and Improvements}
The results supporting the use of the spike distance objective are limited by the scope of the experiment. The two learning objectives were compared in the context of a neural network model with mostly convolutional layers, and how the properties of this architecture, such as the translation invariance of the convolutional layers, support or undermine each learning objective is an interesting question that is not investigated in this work. How the differences between the model outputs interact with hyperparameters or affect regularization are other considerations that are not investigated.

A strength of the Poisson objective that remains undisputed is interpretability—GLMs trained with a Poisson objective can be easily probed to obtain interpretable stimulus filters \citep{weberCapturingDynamicalRepertoire2017}. It is unclear whether models with a spike distance output would be as useful in this regard. It is only in the context of predicting precise spike times that this work argues for the use of the spike distance objective.

There are opportunities to increase the competitiveness of the spike distance approach. The model architecture can be optimized with spike distances in mind (the architecture used in this work was optimized for the performance of the Poisson models). As some aspects of generating spike distance arrays do not need real spike data to be learned, pretraining the network using synthetic data may be effective. As for the real data—it can be better utilized by training a single model on multiple cells, which was avoided in this work in order to improve our ability to compare models. Finally, the space of algorithms for inferring spikes from spike distance arrays has yet to be explored deeply. Algorithm \ref{alg:spike_inference} demonstrates effectiveness in practice; however, it is an approximate solution, and to what extent an optimal solution improves spike prediction is an interesting question. 

It is also interesting to ask: what other representations can create effective learning objectives? We experimented with other representations such as a Van Rossum-like representation and a representation that tracked separately the time until and time since a spike. With these representations, we observed model outputs that were more difficult to infer spikes with, and we did not pursue these past early experiments. Such representations are worth further investigation.

\section{Conclusion} \label{sec:conclusion}
We propose using a ``distance to spike'' representation of a spike train which we call a \textit{spike distance function}. We show that neural networks can be trained to approximate spike distance functions and that these outputs can be used to predict spikes. For the task of spike train prediction, we show that models trained to approximate a spike distance function outperform models trained using the standard Poisson objective.

\section*{Impact Statement}
This paper presents work whose goal is to advance the field of Machine Learning. There are many potential societal consequences of our work, none which we feel must be specifically highlighted here.

\section*{Acknowledgements}
KD would like to acknowledge scholarship support from the Leverhulme Trust (DS-2020-065). CY acknowledges funding from the European Union’s Horizon 2020 research and innovation programme under the Marie Skłodowska-Curie grant agreement no: 101026409. TP acknowledges funding from the Wellcome Trust (Investigator Award in Science 220277/Z20/Z), the European Research Council (ERC-StG ``NeuroVisEco'' 677687), URKI (BBSRC, BB/R014817/1, BB/W013509/1 and BB/X020053/1), the Leverhulme Trust (PLP-2017-005, RPG-2021-026 and RPG-2-23-042) and the Lister Institute for Preventive Medicine. This research was funded in whole, or in part, by the Wellcome Trust [220277/Z20/Z]. For the purpose of open access, the author has applied a CC BY public copyright licence to any Author Accepted Manuscript version arising from this submission.

\bibliographystyle{plainnat}
\bibliography{main.bib}

\vfill
 
\pagebreak
\clearpage 

\appendix

\section{Appendix}
Appendices \ref{app:mxm}-\ref{app:xenopus} present supplementary results. Appendix \ref{app:metrics} describes the evaluation metrics in more detail. Appendix \ref{app:evaluation_methodology} argues for the suitability of our evaluation methodology. Details of the MEA recording and how it is processed to form a dataset are presented in Appendices \ref{app:recording} and \ref{app:dataset}. Appendices \ref{app:models}, \ref{app:arch_search} and \ref{app:hyperparameters} cover the models used. Computational resources used are listed in Appendix \ref{app:comp_resources}. Appendices \ref{app:spike_distance} and \ref{app:discrete_distance} go into more detail on working with spike distances.

\begin{figure*}
  \begin{subfigure}[t]{\columnwidth}
      \centering
       \includegraphics[width=0.82\columnwidth]{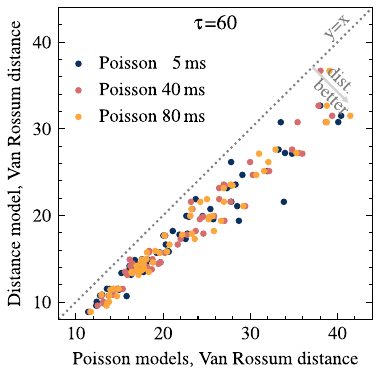}
       \bigbreak
       \includegraphics[width=0.82\columnwidth]{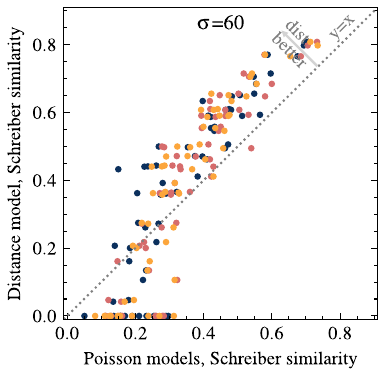}
       \bigbreak
       \includegraphics[width=0.82\columnwidth]{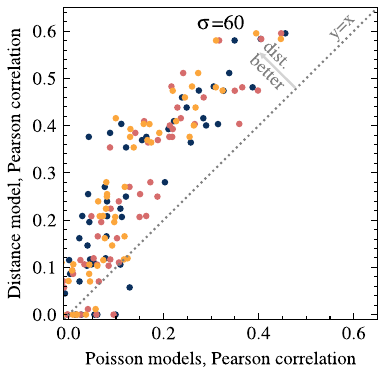}
    \caption{Poisson models use inference via sampling}
    \label{fig:sampling_mxm}
    \end{subfigure}
\hfill
\begin{subfigure}[t]{\columnwidth}
  \centering
  \includegraphics[width=0.82\columnwidth]{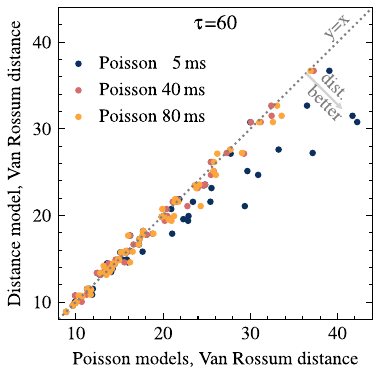}
  \bigbreak
  \includegraphics[width=0.82\columnwidth]{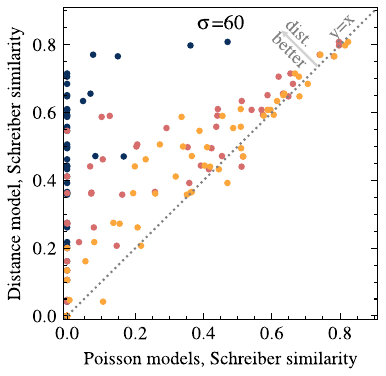}
  \bigbreak
  \includegraphics[width=0.82\columnwidth]{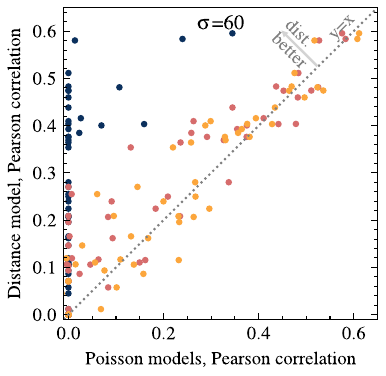}
  \caption{Poisson models use inference via the rounded mean}
  \label{fig:round_mxm}
\end{subfigure}
\caption{Zoomed in to a single smoothing value from Figure \ref{fig:xsigma}, the spike distance model is compared to 3 of the Poisson models for all 60 cells. The left and right columns of this figure correspond to the left and right columns of Figure \ref{fig:xsigma}: \textbf{(left)} Poisson models use sampling for inference, \textbf{(right)} Poisson models use the rounded mean for inference. All three metrics are presented: \textbf{(top)} Van Rossum distance at $\tau=\qty{60}{ms}$, \textbf{(middle)} Schreiber similarity at $\sigma = \qty{60}{ms}$, and \textbf{(bottom)} Pearson correlation at $\sigma = \qty{60}{ms}$. 
     }
\label{fig:mxm}
\end{figure*}

\subsection{Variability Across Cells} \label{app:mxm}
Figure \ref{fig:mxm} explores the variability of the data by zooming in on a single smoothing parameter value (the x-axis value $\tau = \sigma = 60$) of the results in Figure \ref{fig:xsigma}. The purpose of this figure is to give readers a better sense of the spread of the data that is being condensed in Figure \ref{fig:xsigma}.

\begin{figure}[h]
  \centering
  \includegraphics[width=0.92\columnwidth]{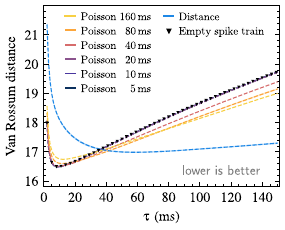}
  \bigbreak
  \includegraphics[width=0.92\columnwidth]{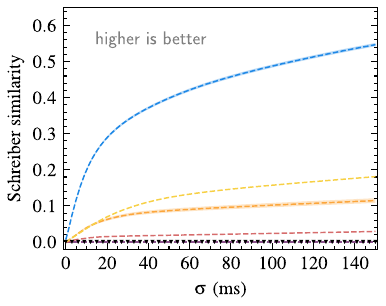}
  \bigbreak
  \includegraphics[width=0.92\columnwidth]{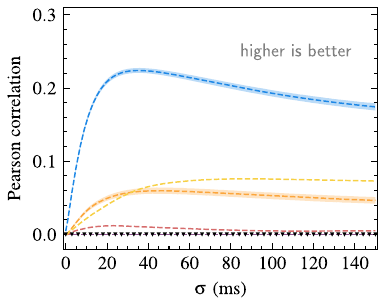}
  \caption{Model comparison when the mode of the Poisson distribution is used for spike prediction with the Poisson models. Van Rossum distance \textbf{(top)}, Schreiber similarity \textbf{(middle)} and Pearson correlation \textbf{(bottom)} between the ground truth and output spike trains, reported as interquartile mean over the 60 chicken cells. All metrics are evaluated for a range of smoothing parameters. 95\% confidence intervals are included for the Poisson-\qty{80}{\ms} and spike distance models.}
      \label{fig:floor_xsigma}
  \centering
\end{figure}

\subsection{Inference by Maximizing Probability} \label{app:floor}
During inference, using the mode of the Poisson distribution (the floor of a Poisson model's output) is a principled approach to selecting the spike count with the highest probability. Using this strategy performs significantly worse than using the rounded mean of the distribution (the rounded model's output). Figure \ref{fig:floor_xsigma} complements Figure \ref{fig:sampling_xsigma} and Figure \ref{fig:round_xsigma} by reporting the performance of the Poisson models when using the mode for inference.

\subsection{Input-Output Snippets for Inference Using Sampling and Using the Mode} \label{app:rollout}
An input-output snippet comparing the spike distance model to the Poisson models was shown in Figure \ref{fig:rollout}. The Poisson models in this figure used the rounded model outputs for spike count inference. We repeat this figure but with Poisson models using inference via sampling (Figure \ref{fig:rollout_sampling}) and inference via the mode (Figure \ref{fig:rollout_floor}).

\begin{figure*}[h]
    \centering
    \includegraphics[width=\textwidth]{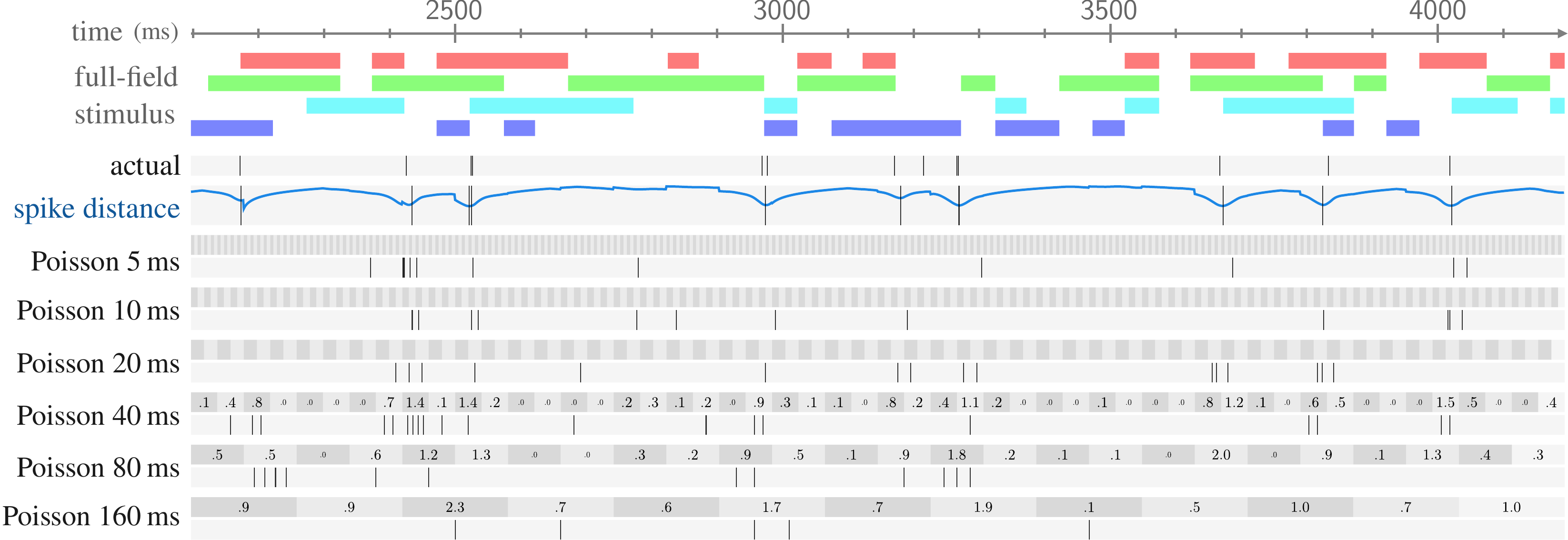}
    \captionof{figure}{\textbf{Inference via sampling, for Poisson models.} 
    The same input-output snippet from Figure \ref{fig:rollout} is repeated here, but with spike count inference via sampling being used for the Poisson models. The boxed numbers represent the neural network outputs, rounded to two decimal places (rounded to fit in the figure). When using sampling, a spike count quite different from a model's output can become the prediction. This is beneficial for short summation intervals, where outputs might otherwise be rounded or floored to zero, but with longer intervals, the model's output may be an accurate prediction that then gets perturbed by the sampling process. Note: due to limited space in the figure, the models with summation interval below \qty{40}{ms} have model output values omitted.} \label{fig:rollout_sampling}
\end{figure*}%
\begin{figure*}[h]
    \centering
    \includegraphics[width=\textwidth]{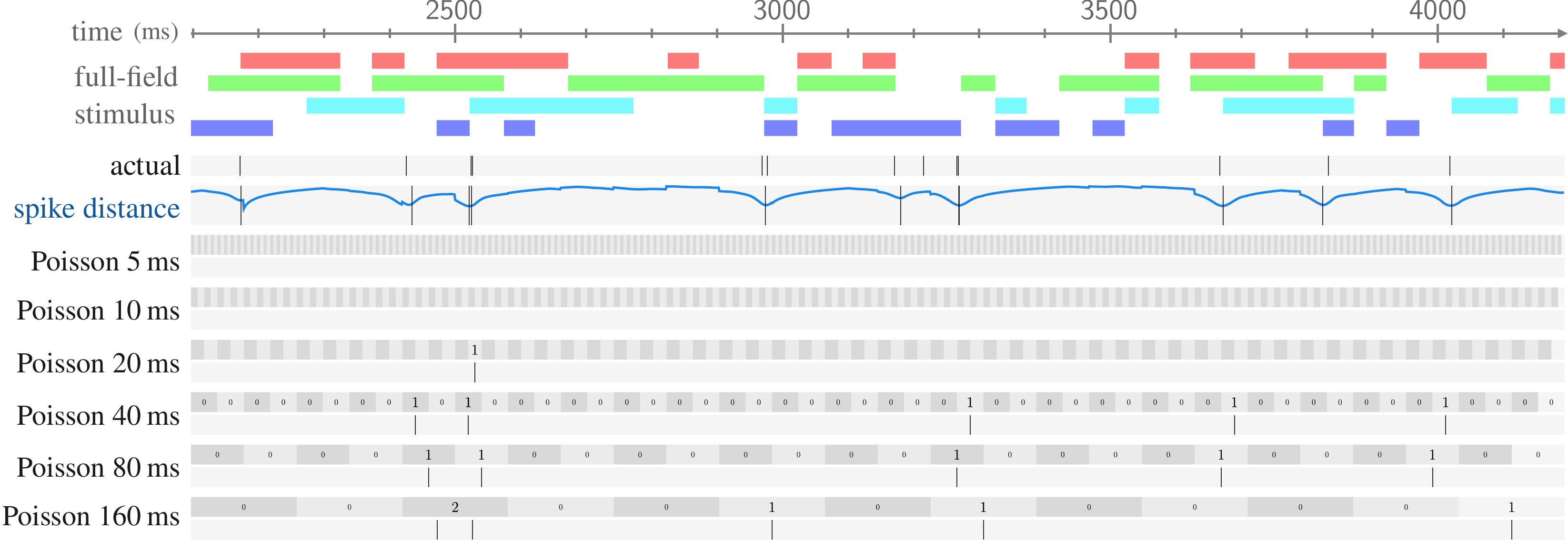}
    \captionof{figure}{\textbf{Inference via the distribution mode, for Poisson models.} 
    The same input-output snippet from Figure \ref{fig:rollout} (and Figure \ref{fig:rollout_sampling}) is repeated here, but with Poisson models using the distribution's mode for spike count inference. The boxed numbers represent the floored neural network outputs. The mode of the Poisson distribution is the floor of the model's output. For a preponderance of outputs, the model's output is floored to zero.
  Note: due to limited space in the figure, the models with summation interval below \qty{20}{ms} have their model output values omitted; predicted spikes are still shown, although there are no predicted spikes to show for the Poisson \qty{10}{ms} and Poisson \qty{5}{ms} models.} \label{fig:rollout_floor}
\end{figure*}%

\subsection{Xenopus (Frog) \acp{RGC}} \label{app:xenopus}
Xenopus (frog) \acp{RGC} were recorded under similar conditions to those of the chicken \acp{RGC}. The same \ac{MEA} device and recording process was used. One difference is that, unlike the chicken dataset for which \citet{seifertBirdsMultiplexSpectral2023} applied a quality criterion to filter out poorly responding cells, no such filtering was applied to the frog data. The same average spike rate threshold of 0.75 spikes per second in the training set was applied, and this filtering resulted in 113 cells. Figure \ref{fig:frog_xsigma} shows the Van Rossum distance, Schreiber similarity and Pearson correlation for 113 recorded frog \acp{RGC}. All models were trained for a single run only (no confidence intervals). The gap in performance between the spike distance model and the Poisson models is narrower for the frog \acp{RGC} in comparison to the chicken \acp{RGC} (Figure \ref{fig:xsigma}); however, the spike distance model is still the most consistently competitive model across the three metrics.

\clearpage

\begin{figure*}
\begin{subfigure}[t]{\columnwidth}
  \centering
  \includegraphics[width=0.93\linewidth]{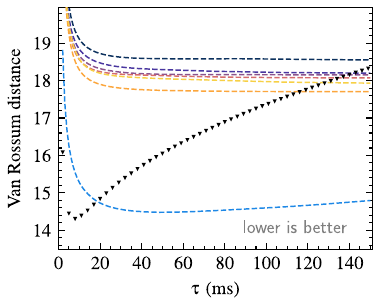}
  \bigbreak
  \includegraphics[width=0.93\linewidth]{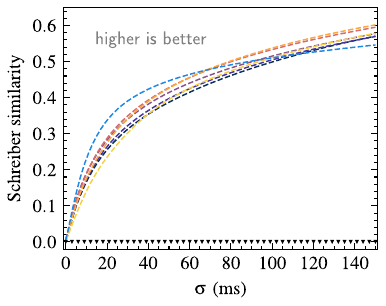}
  \bigbreak
  \includegraphics[width=0.93\linewidth]{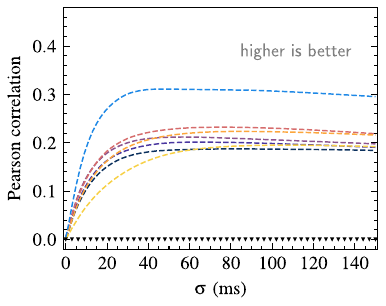}
  \caption{Poisson models use inference via sampling}
  \label{fig:frog_sampling_xsigma}
\end{subfigure}
\hfill
\begin{subfigure}[t]{\columnwidth}
  \centering
  \includegraphics[width=0.93\columnwidth]{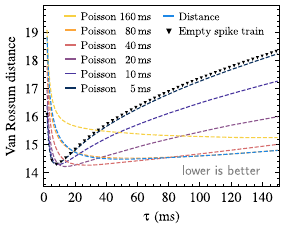}
  \bigbreak
  \includegraphics[width=0.93\columnwidth]{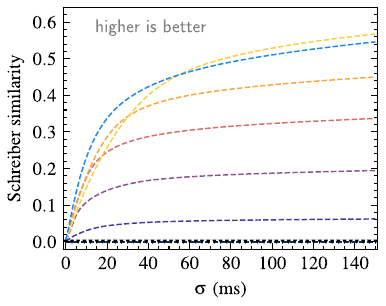}
  \bigbreak
  \includegraphics[width=0.93\columnwidth]{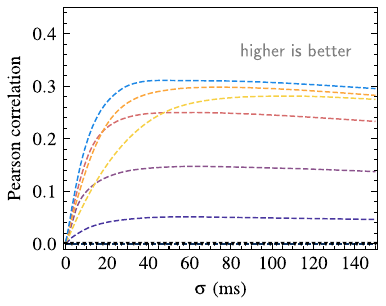}
  \caption{Poisson models use inference via the rounded mean}
\label{fig:frog_round_xsigma}
\end{subfigure}
\caption{Frog \acp{RGC}. 
Performance comparison between the spike distance model and the Poisson models at spike prediction. For the Poisson models, two spike count inference strategies are considered: \textbf{(left)} sampling and \textbf{(right)} using the rounded mean. Three metrics are considered: \textbf{(top)} Van Rossum distance, \textbf{(middle)} Schreiber similarity and \textbf{(bottom)} Pearson correlation between the ground truth and output spike trains, reported as interquartile mean over the 113 frog \acp{RGC}. All models are trained for a single run (no confidence intervals). Metrics are evaluated for a range of their smoothing parameters, ($\tau$ and $\sigma$), from 0 to 150. Scores for the zero spike train are included for comparison, highlighting that the metrics become less effective at comparing models at low smoothing levels—in particular, the Van Rossum distance eventually prefers the zero spike train for sufficiently low $\tau$ values.}
\label{fig:frog_xsigma}
\end{figure*}

\clearpage 

\begin{figure*}[ht]
    \centering
    \includegraphics[width=\textwidth]{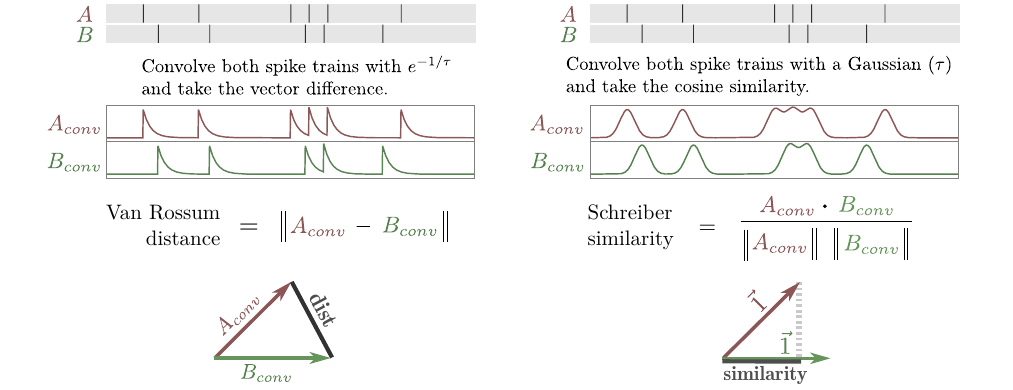}
    \captionof{figure}{ 
      Calculation of Van Rossum distance (\textbf{left}) and Schreiber similarity (\textbf{right}) from two input spike trains. Both metrics employ distinct smoothing kernels to produce an intermediate vector. Van Rossum distance can be thought of as measuring the Euclidean distance between the vectors, while the Schreiber similarity assesses the angle between them.}
    \label{fig:metrics}
\end{figure*}

\subsection{Evaluation Metrics} \label{app:metrics}
This section describes the three spike train metrics used for evaluation. Figure \ref{fig:metrics} describes Van Rossum distance and Schreiber similarity. Pearson correlation is the well-known normalized covariance. 

Choosing a spike train metric remains an unsatisfying activity as there is not yet any strong basis on which the characteristics of different metrics can be definitively ranked. As mentioned in Section \ref{sec:evaluation}, Schreiber similarity and Van Rossum distance were chosen as they are frequently used and easy to explain. Pearson correlation is reported so as to have a common metric with the work of \citet{mcintoshDeepLearningModels2016} and \citet{gogliettinoModelingResponsesMacaque2024a}. We refer to \citet{paivaComparisonBinlessSpike2010} and \citet{sihnSpikeTrainDistance2019} for a comparison of different spike train measures. 

Each of the three spike train metrics used in this work is parameterized by a smoothing parameter. For Van Rossum distance and Schreiber similarity, their parameters are an integral component of their definition. We augment the Pearson correlation by prepending a smoothing step copied from the Schreiber similarity. For Schreiber similarity and Pearson correlation, spike trains are first smoothed with a Gaussian kernel with standard deviation parameter $\sigma$; for Van Rossum distance, the first step is to smooth spike trains with an exponential kernel with decay parameter $\tau$.

An important distinction between Van Rossum distance and Schreiber similarity is their sensitivity to the overall number of spikes. Van Rossum distance lacks any normalization to spike count—it cares about both where spikes are placed and how many. In comparison, Schreiber similarity includes a normalization to the number of spikes present in both spike trains. This distinction between the metrics explains how some models can perform well in terms of one metric and poorly in terms of the other. To illustrate this, consider an example where a cell's average spike rate is known, but there is no information on when spikes occur. Two prediction strategies are: predict zero spikes or predict a constant rate of spikes equal to the known average rate. Van Rossum distance will score the first strategy more favourably than the second, while it is the reverse for Schreiber similarity. The Van Rossum distance will penalize the many incorrectly timed spikes predicted by the second strategy. Under Schreiber similarity, the first strategy will necessarily score zero while the second strategy scores positively for having some positive projection onto the ground truth spike train. This distinction explains how, if the preponderance of a Poisson model's outputs are close to zero, switching to sampling for spike count inference can lead to better performance in terms of Schreiber similarity and worse performance in terms of Van Rossum distance. This illustrates the benefit of using both Van Rossum distance and Schreiber similarity for evaluation: they give different perspectives on the quality of a predicted spike train.

\subsection{Evaluation Methodology} \label{app:evaluation_methodology}
In this section, we justify our departure from other evaluation practices.

This work's approach to evaluation is to consider the quality of generated spike trains. Our evaluation departs from previous work by avoiding binning, avoiding a single fixed smoothing, and avoiding multi-trial averaging. In \citet{battyMultilayerRecurrentNetwork2016}, \citet{mcintoshDeepLearningModels2016} and \citet{cadenaDeepConvolutionalModels2019}, ground truth spikes are \textit{binned} then Gaussian \textit{smoothed} and compared to Gaussian \textit{smoothed} model outputs. Furthermore, ground truths are \textit{averaged} over multiple trials in \citet{mcintoshDeepLearningModels2016} and \textit{normalized} by the average over multiple trials in \citet{battyMultilayerRecurrentNetwork2016} and \citet{cadenaDeepConvolutionalModels2019}. In each of these studies, the choice of bin size and smoothing filter width were \textit{fixed} and chosen to match the bin size and smoothing used during training. A downside of our approach is that, by choosing spike trains as the inputs to comparison, we must first convert Poisson model outputs into spike predictions, and this introduces the extra consideration of which spike inference strategy works best.

We also eschew averaging or normalizing across repetitions of the same stimulus. As argued by \citet{brettePhilosophySpikeRateBased2015}, neuron state is not stable, and differences across stimulus presentations may arise not solely from noise to be disregarded but from the fact that the state of a neuron or network of neurons can change. With this in mind, we present a single stimulus pattern without repetition and delegate the modelling of uncertainty and identification of latent variables to the neural network.

Finally, we argue against the use of negative log-likelihood (NLL) as an effective evaluation metric. Likelihood-based metrics are common in neural temporal point process literature. In their review of neural temporal point processes, \citet{shchurNeuralTemporalPoint2021a} argue against the prevalent usage of NLL, saying that ``NLL is mostly irrelevant as a measure of error in real-world applications'' and claim that the lack of investigations focusing on domain-specific metrics is a major gap in current neural temporal point process literature. In our work, NLL as a metric would not allow the direct comparison of Poisson models with different summation intervals, nor would it provide a way to measure the performance of the spike distance model, which does not output probability distributions. Our results demonstrate the benefits that can be achieved by working outside the probabilistic setting.

\subsection{Retina Recording} \label{app:recording}
The main dataset is created from a single 15-minute recording of chicken \ac{RGC} spike activity carried out by \citet{seifertBirdsMultiplexSpectral2023}. Important aspects of the recording will be summarized here (see \citet{seifertBirdsMultiplexSpectral2023} for full details). A chicken retina was placed on an \ac{MEA}, and a full-field colour sequence driven by 4-LEDs was projected onto the retina. The colour sequence was a random sequence of \qty{50}{ms} frames, with each LED having a 50\% chance of being on or off in each frame. The electrical activity recorded by the \ac{MEA} was passed through a spike sorter to produce a sequence of spike events sampled at \qty{17.9}{kHz}. The stimulus frame changes were also recorded as trigger events at \qty{17.9}{kHz}. At this point, quality criteria were used to filter out cells which were considered to respond poorly. The stimulus was downsampled by a factor of 18 to \qty{992}{Hz}, and the spike data was rebinned to \qty{992}{Hz}. Keeping correspondence between the sample period of the stimulus and spike data allows both to be stacked into a single 2D array. Downsampling by 18 was chosen as the sample period of \textasciitilde \qty{1}{ms} is short enough that no two spikes are recorded in the same bin, which simplifies the spike distance calculation. Downsampling by 18 is also convenient as it results in a sample period very close to \qty{1}{ms} (\qty{1.0008}{\ms}), enabling number of samples and number of milliseconds to be used interchangeably with little effect on precision. 

The same procedure was followed to collect the frog \ac{MEA} recording, except for one difference: no cell filtering based on quality criteria was carried out.

The spike trains produced by the spike sorter are not a perfect representation of the true spike activity produced by the neurons—false positives, false negatives and the incorrect assignment of spikes to cells are all issues.  In this work, we consider the neurons, the multi-electrode array and the spike sorter as a single black box whose behaviour we predict; we do not attempt to make claims about any of the three components individually. 

Further processing of the data was carried out in order to create separate training, validation and test sets. This is described next.

\subsection{Dataset} \label{app:dataset}
The 15-minute recording was split according to the ratio (7, 2, 1) into training, validation and test sets as shown previously in Figure \ref{fig:data}. As the health of the retina can degrade over the course of the recording altering its behaviour, the (7, 2, 1) split was formed by first splitting the recording by the ratio (7, 2, 2, 2, 7) and then combining segments: the first and last 5.25-minute segments form the training set, the single middle 90-second segment forms the test set, and the two remaining 90-second segments form the validation set. This approach allows the test set to be exposed to both extremes while keeping the test set as a single contiguous chunk.

For the Poisson models, each dataset sample is a tuple (input, target), where the input is a $5 \times 992$ array representing 1 second of stimulus and spike history, and the target is a scalar representing the number of spikes in the summation interval. For the spike distance model, each sample is a tuple (input, target) where the input is the same as that for the Poisson models, and the target is a $128$-length array representing \textasciitilde \qty{128}{\ms} of ground-truth spike distance data, with $t=0$ positioned at index 32. The choice of these values are hyperparameters discussed in Appendix \ref{app:hyperparameters}.

As the dataset's sequential nature means that adjacent samples will be the same except for a 1 time step shift, we introduce a configurable stride parameter for the training dataset. A larger stride decreases the number of samples that constitute an epoch; this allows model evaluation and checkpoints to be carried out frequently while still being synchronized to epoch completions. The intermediate samples are still used in training, but they appear as augmentations by random indexing between stride indices. For all experiments, the stride was set to 13. A value of 13 resulted in the U-shaped learning curve being observed over 80 epochs. 

\newcolumntype{L}{>{\centering\arraybackslash}m{0.20\linewidth} }
\begin{table*}[t]
\centering
\caption{Base architecture shared by both Poisson and spike distance models. \textasciitilde \qty{302}{\kilo\nothing} parameters, \textasciitilde \qty{8.28}{\mega\nothing} multi-adds.}
\addtolength{\tabcolsep}{-2pt}
\vspace{4ex}
\scalebox{0.85}{
  \begin{tabular}{L|c|c|c}
Layer & Input size & Kernels (length, channels, stride) & Output size \\
\hline
\hline
\multirow{3}{*}{Initial conv} & 
\multirow{3}{*}{5$\times$992} & 
\multirow{3}{*}{$\begin{bmatrix}15, & 64, & 2\end{bmatrix} \times 1$} & 
\multirow{3}{*}{$64 \times 496$} \\
& & & \\
& & & \\
\hline 
\multirow{6}{0.2\textwidth}{ConvNext blocks \\ (downsampling)} & 
\multirow{6}{*}{\begin{tabular}[c]{@{}c@{}} 64 $\times$ 496 \end{tabular}} & 
\multirow{6}{*}{$\begin{bmatrix}3 & 64 & 2\\1 & 128 & 1\\5 & 128 & 1\\1 & 64 & 1\end{bmatrix}$ $\times$ 6}  & 
\multirow{6}{*}{$64 \times 8$} \\
& & & \\
& & & \\
& & & \\
& & & \\
& & & \\
\hline
\multirow{5}{*}{ConvNext blocks} &
\multirow{5}{*}{\begin{tabular}[c]{@{}c@{}} $64 \times 8$ \end{tabular}} & 
\multirow{5}{*}{$\begin{bmatrix}1, & 128, & 1\\3, & 128, & 1\\1, & 64, & 1\end{bmatrix}\times 4$}  & 
\multirow{5}{*}{$64 \times 8$} \\
& & & \\
& & & \\
& & & \\
& & & \\
\hline

\end{tabular}
}

\label{table:base_model}
\end{table*}

\begin{table*}[t]
\centering
\caption{Distance model head. \textasciitilde\qty{29}{\kilo\nothing} parameters, \textasciitilde\qty{587}{\kilo\nothing} multi-adds.}
\addtolength{\tabcolsep}{-2pt}
\vspace{4ex}
\scalebox{0.85}{
  \begin{tabular}{L|c|c|c}
Layer & Input size & Kernels (length, channels, stride) & Output size \\
\hline
\hline
\multirow{5}{*}{ConvNext} & 
\multirow{5}{*}{\begin{tabular}[c]{@{}c@{}} 64 $\times$ 8 \end{tabular}} & 
\multirow{5}{*}{$\begin{bmatrix}1, & 128, & 1\\5, & 128, & 1\\1, & 16, &
1\end{bmatrix} \times 1$}  & 
\multirow{5}{*}{16 $\times$ 16} \\
& & & \\
& & & \\
& & & \\
& & & \\
\hline

\multirow{5}{*}{ConvNext} & 
\multirow{5}{*}{\begin{tabular}[c]{@{}c@{}} $16 \times 16$ \end{tabular}} & 
\multirow{5}{*}{$\begin{bmatrix}1, & 32, & 1\\5, & 32, & 1\\1, & 16, & 1\end{bmatrix} \times 3$}  & 
\multirow{5}{*}{$16 \times 128$} \\
& & & \\
& & & \\
& & & \\
& & & \\
\hline

\multirow{3}{*}{Conv} & 
\multirow{3}{*}{$16\times128$} & 
\multirow{3}{*}{$\begin{bmatrix}1, & 1, & 1\end{bmatrix} \times 1$} & 
\multirow{3}{*}{$1 \times 128$} \\
& & & \\
& & & \\
\hline
\end{tabular}
}
\label{table:distance_model}
\end{table*}
\begin{table*}[!t]
\centering
\caption{Poisson model head. 513 parameters, 513 multi-adds.}
\addtolength{\tabcolsep}{-2pt}
\vspace{4ex}
\scalebox{0.85}{
\begin{tabular}{c|c|c|c}
Layer & 
Input size & Kernels (length, channels, stride) & Output size \\
\hline
\hline

\multirow{3}{*}{Flatten} & 
\multirow{3}{*}{\begin{tabular}[c]{@{}c@{}}$ 64 \times 8 $ \end{tabular}} & 
\multirow{3}{*}{None} &
\multirow{3}{*}{512} \\
& & & \\
& & & \\
\hline

\multirow{4}{*}{FC layer} & 
\multirow{4}{*}{\begin{tabular}[c]{@{}c@{}} $512 \times 1 $\end{tabular}} & 
\multirow{4}{*}{$\begin{bmatrix}1, & 512, & 1\end{bmatrix}$}  & 
\multirow{4}{*}{1} \\
& & & \\
& & & \\
\hline
\end{tabular}
}
 
\label{table:poisson_model}
\end{table*}

\subsection{Models} \label{app:models}
The convolutional kernels used in the shared base architecture are shown in Table \ref{table:base_model}. In addition to the convolutional layers, a learnable positional embedding was added to the output of the initial convolutional layer of the shared architecture. The kernels used for the spike distance head and the Poisson head are shown in Tables \ref{table:distance_model} and \ref{table:poisson_model} respectively. The shared base architecture accounts for 88\% of parameters and 93\% of multi-adds in the spike distance model, and accounts for practically all the parameters and multi-adds for the Poisson models. 

ConvNext blocks introduced by \citet{liuConvNet2020s2022} were chosen to form the main component of the architecture as they are commonly used and well-tested. Our implementation includes the Global Response Normalization layer (GRN) from ConvNextv2 blocks introduced by \citet{wooConvNeXtV2CoDesigning2023}. We use dropout at the end of each block, at a rate of 0.2.

\subsection{Architecture Search} \label{app:arch_search} 
The number of mid-layers in the shared base architecture was varied in order to increase confidence that performance differences between the spike distance model and the Poisson models are not due to differences in layer and parameter counts. The number of mid-layers was varied between 0 and 7 (inclusive) and the configuration that best served the Poisson-\qty{80}{\ms} model was chosen. This selection was based on the interquartile mean of the loss across cells in the validation set. Focus was given to the Poisson-\qty{80}{\ms} model as it was observed to be the most competitive Poisson model.

For each of these configurations, the Poisson-\qty{80}{\ms} model was trained once for each of the 60 chicken \acp{RGC}. The interquartile mean loss is shown in Figure \ref{fig:loss_by_nlayer}. To gauge the magnitude of the effect on metrics, the interquartile mean Pearson correlation is reported in Figure \ref{fig:pcorr_by_nlayer}. Other metrics show a similar lack of dependency on layer count. The 5-layer configuration was selected based on the loss; however, it seems like the number of mid-layers has very little effect on the performance of the Poisson-\qty{80}{\ms} model.

\clearpage
\begin{figure*}[htp]
    \begin{minipage}[t]{0.495\textwidth}
        \centering
      \vspace{0pt}
      \includegraphics[width=\textwidth]{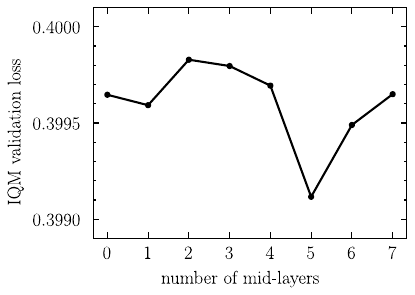}
        \caption{Interquartile mean validation loss for a range of mid-layer counts. The IQM is calculated over the 60 separately trained Poisson-80 models, one for each of the 60 \acp{RGC}. The loss is the mean validation set loss.}
        \label{fig:loss_by_nlayer}
    \end{minipage}
    \hfill
    \begin{minipage}[t]{0.470\textwidth}
      \centering
      \vspace{0pt}
    \includegraphics[width=\textwidth]{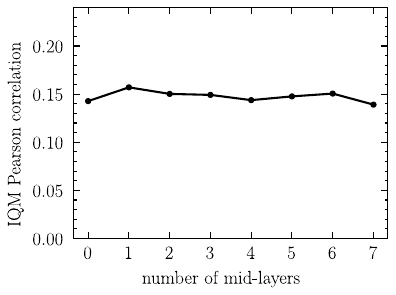}
        \caption{Interquartile mean Pearson correlation for a range of mid-layer counts. The IQM is calculated over the 60 separately trained Poisson-80 models, one for each of the 60 \acp{RGC}. Pearson correlation is calculated between the output and validation set ground-truth spike trains, smoothed with a Gaussian kernel with $\sigma=20$.}
        \label{fig:pcorr_by_nlayer}
    \end{minipage}
\end{figure*}

\begin{figure*}[ht]
  \centering
  \includegraphics[width=\textwidth]{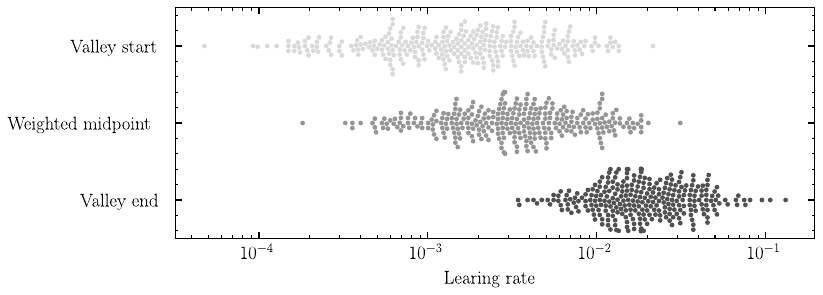}
  \caption{Three learning rates identified by the range test for learning rates described by \citet{smithSuperconvergenceVeryFast2019}. Valley start and valley end describe the edges of the U-shaped valley containing the suggested learning rate, which is a weighting of the edges. }
    \label{fig:lr_finder}
\end{figure*}

\clearpage

\subsection{Hyperparameters} \label{app:hyperparameters}
A list of hyperparameters associated with models, training and inference.

\textbf{Architecture}. The following settings are associated with the model architectures.
\begin{itemize}
  \item Channel count in the shared base: 64. Early experiments not reported here found 32 to have slightly worse performance for all models, and 64 and 128 were observed to have similar performance. 
  \item Channel count in the distance model head: 16. Increasing this setting was observed to improve performance; however, to keep the model's parameter count close to that of the Poisson models, the channel count was kept low.
  \item Channel expansion factor of the ConvNext blocks: 2. Changes to this value were not tested.
  \item Dropout rate: 0.2. Early experiments not reported here observed benefits of dropout for all models. Once added, the dropout rate was not experimented with further.
\end{itemize}

\textbf{Training}. 
Settings associated with training were chosen as described below. 

\begin{itemize}
  \item Maximum learning rate: $\num{5e-4}$. This was chosen by running the LR range test introduced by \citet{smithSuperconvergenceVeryFast2019} on an earlier architecture for all chicken \acp{RGC}. After finalizing the architecture (see Appendix \ref{app:arch_search}), the range test was run again to ensure that the learning rate was still appropriate. Figure \ref{fig:lr_finder} plots the results of this second range test, which supports $5 \times 10^{-4}$ still being an appropriate learning rate.
  \item Batch size: 256. Smaller batch sizes resulted in a training speed considered onerously slow.
  \item Dataset stride: 13. The dataset stride is described in Appendix \ref{app:dataset}. 
  \item Epochs: 80. Training is carried out for 80 epochs. Each time step will be part of a model input on average $80 \times \frac{992}{13} = 6105$ times.
  \item AdamW parameters: $(\beta_1, \beta_2, \text{eps}, \text{weight decay}) = (0.9, 0.99, \num{1e-5}, 0.3)$. These were chosen in line with heuristics described by \citet{howardDeepLearningCoders2020}.
  \item Learning rate scheduler: 1-cycle scheduler policy described by \citet{smithSuperconvergenceVeryFast2019} was used, with three-phase enabled and other options as default values in Pytorch 2.0's implementation.
  \item Precision: Nvidia's automatic mixed precision was used.
\end{itemize}

\textbf{Spike distance}. A number of settings were used to control the nature of the output spike distance array and the inference process:

\begin{itemize}
  \item \textit{stride}: how many time steps to shift forward when carrying out autoregressive inference. This was set to 80 time steps (\textasciitilde \qty{80}{ms}). Shorter strides offer improved inference at the cost of increased compute. 80 was chosen to match the Poisson-\qty{80}{\ms} model.
  \item $L$: the length of the spike distance array the model outputs. 128 is the next power of two greater than 80, allowing the distance model's head to be a simple sequence of 4 upsample blocks. The additional time steps were split between before $t=0$ and after $t=\qty{80}{ms}$ according to the next setting, $i_0$.
  \item $i_o$: the index into the array that corresponds to $t=0$. This was set to 32. A low value negatively affects inference as described in Appendix \ref{app:extended_dist_arr}. 32 was chosen as it is the length of 2 activations before passing through the 4 upsample blocks (leaving 1 activation that extends past $t=80$). A search for an optimal value has not been carried out.
  \item $M$: the maximum distance for an element of the spike distance field. This was set to \qty{200}{ms}. This is not a very sensitive parameter. A few settings were experimented with, as described in Appendix \ref{app:maximum_distance}.
\end{itemize}

\subsection{Computational Resources} \label{app:comp_resources}

\begin{figure*}[ht]
  \centering
  \includegraphics[width=\textwidth]{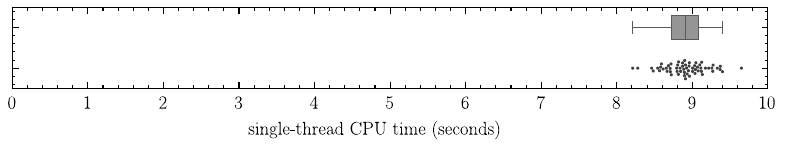}
  \caption{Running time of Algorithm \ref{alg:spike_inference} to infer 90 seconds of spike behaviour for 60 cells, in steps of \qty{80}{\ms}. Running time is calculated with Python's \texttt{timeit} package. In order to isolate the running time of Algorithm \ref{alg:spike_inference}, inference is not autoregressive, and instead uses precalculated model outputs as inputs to the algorithm. (Mean, standard deviation) = (8.9, 0.27).} \label{fig:inference_perf}
\end{figure*}

All training and inference was carried out on a single workstation, with CPU, GPU and RAM specifications: AMD Ryzen 9 5900X CPU, Nvidia RTX 3090 GPU and 128 GiB RAM.

Training a single model on all 60 chicken \acp{RGC} took \textasciitilde \qty{12}{hours}. This was repeated 11 times for the Poisson-\qty{80}{ms} model and 11 times for the spike distance model in order to calculate confidence intervals. Only 1 repeat was carried out for the remaining five Poisson models. For architecture variations, the Poisson-\qty{80}{\ms} model was trained 8 times for all \acp{RGC} to investigate the number of mid-layers. Training on the frog \acp{RGC} took \textasciitilde \qty{23}{hours}, and was done once for each of the 7 models. 

The total training time was \textasciitilde \qty{574}{hours}:
  \qty{12}{hours\per model} $\times$ \qty[parse-numbers=false]{(2 $\times$ 11 + 5 + 8)}{models}
  + \qty{23}{hours\per model} $\times$ \qty{7}{models} = 581.
Inference was considerably quicker, taking a total of \textasciitilde 8 hours summed over all experiments.

\subsection{Spike Distance} \label{app:spike_distance}
This section covers some details on working with spike distance: choosing a maximum spike distance, and the benefits to inference of outputting an extended spike distance.

\subsubsection{Maximum Spike Distance} \label{app:maximum_distance}
If there are no spikes, then the distance to the nearest spike can be considered infinite. When approximating a distance function, infinite distances can be avoided by fixing a maximum allowed distance. The choice of a maximum is important beyond simply the avoidance of undefined values—it is also useful to limit how far into the past or future we expect a distance function to be reasonably approximated. Consider the spike distance function $f_S$ evaluated at some time point $t_1$. The single value $f_S(t_1)$ contains information about all time points in the interval $[t_1 - f_S(t_1), t_1 + f_S(t_1)]$. For example, if $f_S(t_1) = 10$, then we know that there are no spikes in the open interval $(t_1 - 10, t_1 + 10)$. When training a model for spike prediction, we do not expect the model to anticipate spiking activity far into the future. For a model that predicts spike distance, one way to limit its exposure to the future is to clamp the target spike distance function. In this paper, we work with a maximum spike distance of \qty{200}{ms}. Through experimentation not reported here, for the dataset used in this paper, we tested a range of maximum distances and found that there was not a major difference in performance when the maximum spike distance is between \qty{100}{ms} and \qty{600}{ms}. Below \qty{100}{ms} and above \qty{600}{ms} performance begins to degrade.

\subsubsection{An Extended Spike Distance Improves Inference} \label{app:extended_dist_arr}
Spike prediction performance is improved if the spike distance function outputted by the neural network extends both before and after the interval in which spikes are to be predicted. This means that when predicting $L$ time steps starting from $t_0$, the neural network should output a spike distance function that extends both before $t_0$ and after $t_0 + L$. 

The reason for this lies in the capacity of spike distance to carry information about a spike train both before and after the time point at which it is evaluated. For example, the spike distance at $t_0 - 1$ (past) can be affected by spikes that come after $t_0$ (future). In other words, the spike distance representation for a spike train does not fall neatly into the same time interval. By extending the spike distance function before $t_0$ and after $t_0 + L$, there is more information available for the inference algorithm to judge the placement of spikes. The model presented in this work is trained to output a spike distance array of length $128$, where $32$ elements are assigned to the past. If predicting a spike train for 80 time steps into the future, then there would be $16$ elements of the model's output remaining that extend beyond the end of the spike train.

\subsection{Discrete Spike Distance} \label{app:discrete_distance}
In practice, both spike times and spike distance will be discretized. Instead of the input being exact spike times, we will use spike counts sampled at a certain sample period $T$, and instead of evaluating the spike distance at any real, we will evaluate the distance at discrete times separated by the same period $T$ and collect the values in a spike distance array.

A naive and computationally simple approach of approximating the continuous spike distance is to count the number of samples until a sample with a non-zero spike count. This is a suitable approach when the sample rate is fast in comparison to the spike rate and spikes are rarely coincident in the same sample. Indeed, the main results in this work involve sample frequencies fast enough that no spikes are coincident in the same sample. When using a slower sampling rate, it may be beneficial to use the approach outlined below, which more faithfully discretizes the minimum distance when there is uncertainty over the precise location of spikes. 

To approximate the continuous spike distance function $f_S(t) : \mathbb{R} \to \mathbb{R}$ introduced in Section \ref{sec:distance}, we are seeking a function $\hat{f}_S(t) : \mathbb{N} \to \mathbb{R}$ that is a faithful approximation when both spike times and the points being evaluated are discretized. \citet{osherLevelSetMethods2003} presents several methods for discretizing implicit representations; there the authors use known properties of the physical systems being modelled in order to discretize implicit functions. Our spike events are far less analytically constrained, and so we will instead take a probabilistic approach.

\subsubsection{Motivating the Discrete Spike Distance}
In the continuous case, the two points between which a distance is calculated are obvious: the spike time and the time for which the spike distance is being evaluated. In the discrete case, we must consider candidates for both of these time points.

Consider recording spikes at a sample period $T = \qty{20}{ms}$, where each sample is a sum of the spikes that occurred within a $\qty{20}{ms}$ interval. If a single spike is recorded in the 5\textsuperscript{th} sample, and the 5\textsuperscript{th} sample collected spikes from the interval $(\qty{80}{ms}, \qty{100}{ms})$, then where in this interval should we consider the spike to have occurred? Another question: if we wish to calculate an array of spike distances at some sample rate, what reference point should each sample use from which to measure distance? For spike times there is uncertainty, and for reference points there is a choice to be made. 

A few decisions and assumptions are sufficient to determine a discrete distance. First, we will treat spike times as being random variables distributed uniformly over the sample in which they are recorded. This decision is reasonable when the samples represent a count of spikes within an interval, and there is no further information about the location of spikes within the interval. Indeed, this is the case for this work—the spike events are outputs of a  black-box detection mechanism of the \ac{MEA} and are subsequently rebinned to reduce the sample rate. The second decision is to fix our array of distance calculations to the same frequency as the spike recording, and we measure distances from the midpoint of each interval.

\begin{figure}
  \centering
  \includegraphics[width=0.7\columnwidth]{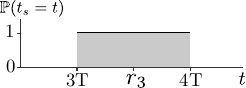}
  \caption{If one spike is detected in sample 3, then the spike distance for sample 3 is given by $\hat{f}(3) = \mathbb{E}[t_3 - r_3] = 0.25T$, where $r_3 = 3.5T$ and $t_3$ is a random variable uniformly distributed over $[3T, 4T]$.}
    \label{fig:spikeprob}
\end{figure}

To appreciate the implications of these choices, consider the case where a single spike is recorded in sample $i$ and we wish to evaluate the discrete spike distance for the same index, $i$. The naive solution is to define $\hat{f}_S(i) = 0$. However, 0 is not a good representation of the spike distance over the whole interval $[iT, (i+1)T]$; 0 is only ever the value of the spike distance at the instantaneous time a spike occurs. A better value for $\hat{f}_S(i)$ should incorporate the fact that there is uncertainty about when exactly a spike occurs within this interval of length $T$. The approach we take goes as follows. Fix the reference point $r_s$ from which spike distances are measured to be the interval's midpoint, $r_s = (i+0.5)T$. Consider the spike time to have a uniform distribution over the interval $[iT, (i+1)T]$. The value for $\hat{f}_S(i)$ will be defined as the expected distance to the reference point, where the expectation is calculated over the uniformly distributed spike time. Figure \ref{fig:spikeprob} shows this case in more detail. The next section formalizes these ideas and arrives at a simple procedure for calculating the discrete spike distance from a sequence of spike counts.

\subsubsection{Definition and Calculation}
What follows is a formalization of the discrete spike distance that can be thought of as applying extreme value theory to spike distances. \citet{colesIntroductionStatisticalModeling2001} is a good introduction to extreme value statistics. What prevents the direct use of a plug-and-play theorem from extreme value theory is that in our situation we are not interested in the asymptotic behaviour of an unknown distribution, but the finite behaviour of a known distribution.

As part of the formalization, we will also generalize what was introduced in Section \ref{sec:distance}. Equation \ref{eq:energy} in Section \ref{sec:distance} defined a spike count sequence to be a binary array. This was a simplification allowed for by the high sample rate that ensured no two spikes were recorded in the same sample. Below, we relax this assumption and allow for the possibility of multiple spikes in a single sample.

We set the scene with two definitions. Similar to how the continuous spike distance function was parameterized by a set of spike times, the discrete spike distance function is parameterized by a sequence of spike counts. 

\begin{definition}Let $T \in \mathbb{R}$ be the sample rate. 
  Let $l \in \mathbb{N}$ be a positive integer representing the number of
  samples. Then a \textbf{spike count sequence} $S = (s_i)_{i=0}^{l-1}$ at 
  sample rate $T$ is a sequence of spike counts, where $s_i$ is the number of
  spikes recorded in the interval $[iT, (i+1)T]$. 
\end{definition}

If $S = (s_i)_{i=0}^{l-1}$ is a spike count sequence with a total spike count $n = \sum_{i=0}^{l-1}s_i$, we can associate with it a sequence of independent random variables, $X_0, X_1, \ldots X_{n-1} $, where each $X_j$ represents a spike time distributed uniformly over one of the length $T$ intervals. For example, if $s_0 = 3$ and $s_1 = 1$, then $X_0, X_1$ and $X_2$ would be distributed uniformly over the interval $[0, T]$ while $X_3$ would be uniformly distributed over the interval $[T, 2T]$.

\begin{definition} Let $i$ be a sample index. 
  Let $S$ be a spike count sequence and let $X_0, X_1, ...X_{n-1}$ be independently identically distributed real random variables representing the spike times of the $n$ spikes from $S$. Each $X_j$ is distributed uniformly over the sample interval in which it is recorded. Let $r_i = i + 0.5$ represent the midpoint of the $i^\text{th}$ sample interval. For each $X_j$ let $D_j = |X_j - r_i|$ be the derived random variable representing the distance of the $j^{\text{th}}$ spike to the midpoint $r_i$. The \textbf{discrete spike distance function} $\hat{f}_S: \mathbb{N} \to \mathbb{R}$ at $i$ is given by:

\begin{equation}
  \hat{f}_{S}(i) = \mathbb{E}_{D_0, D_1, \ldots, D_{n-1}} \min_{0 \leq j < n} \; D_j \;
  \label{eq:discrete_dist}
\end{equation} 

and has units of the sampling period, $T$.
\end{definition}

Calculating this value involves considering only the one or two samples containing spikes that are closest to the sample of interest, $i$. This could be sample $i$ itself, a single sample to the left or right of $i$ or two samples equidistant to $i$, one on either side. Being able to ignore most of the samples allows for the discrete spike distance to be expressed neatly in terms of the spike counts of the closest non-empty sample(s).

\begin{proposition}
Let $d \in \mathbb{N}$ be the difference in samples between the sample of interest $i$ and the closest spike-containing sample(s). Let $m \in \mathbb{N}$ the number of spikes contained within the closest spike-containing sample(s).

The value of $\hat{f}_S(i)$ from Equation \ref{eq:discrete_dist} is given by: 

\begin{equation}
\hat{f}_S(i) =
\begin{cases}
  \hat{f}_S(i) = \frac{1}{2(m + 1)} & \parbox{3cm}{if one or more spikes in sample $i$}
  \vspace{6pt} \\
  \hat{f}_S(i) = d -\frac{1}{2} + \frac{1}{m+1} & \text{otherwise.}
\end{cases}
  \label{eq:discrete_dist_calc}
\end{equation}
\end{proposition}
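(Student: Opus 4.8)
The plan is to collapse the expectation of the minimum over all $n$ spike distances down to an expectation over only the $m$ spikes lying in the closest non-empty sample(s), and then to evaluate that expectation using the elementary order-statistic fact that the minimum of $m$ i.i.d.\ $\mathrm{Uniform}[\alpha,\alpha+L]$ variables has mean $\alpha + L/(m+1)$. That lemma is itself immediate: $\mathbb{P}(\min_{1\le j\le m}Y_j > t) = \bigl(1-(t-\alpha)/L\bigr)^m$ for $t\in[\alpha,\alpha+L]$, and integrating the tail gives $\mathbb{E}[\min_j Y_j] = \alpha + \int_\alpha^{\alpha+L}\bigl(1-(t-\alpha)/L\bigr)^m\,dt = \alpha + L/(m+1)$.

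First I would establish the reduction. Fix the sample of interest $i$ and normalize coordinates so that $r_i = i+0.5$. A spike recorded in the sample at sample-distance $k$ from $i$ has its time $X_j$ uniform on an interval of length one whose nearest point to $r_i$ is at distance $\max(0,k-\tfrac12)$ and whose farthest point is at distance $k+\tfrac12$; hence $D_j=|X_j-r_i|$ is supported on $[\max(0,k-\tfrac12),\,k+\tfrac12]$. Letting $d$ be the distance of the closest non-empty sample(s) and $C$ the index set of the $m$ spikes they contain, every $j\notin C$ satisfies $D_j\ge d+\tfrac12$ surely, while $\min_{j\in C}D_j < d+\tfrac12$ almost surely (equality would require a spike time to land exactly on a sample boundary). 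Therefore, almost surely, $\min_{0\le j<n}D_j = \min_{j\in C}D_j$, and the two expectations coincide.

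Next I would identify the common law of $D_j$ for $j\in C$ and conclude. When $d=0$, writing the sample as $[0,1]$ with $r_i=\tfrac12$, a one-line CDF computation gives $|X_j-\tfrac12|\sim\mathrm{Uniform}[0,\tfrac12]$; applying the lemma with $(\alpha,L)=(0,\tfrac12)$ yields $\hat f_S(i)=\tfrac{1}{2(m+1)}$. When $d\ge1$, a spike in the sample $d$ steps to the right of $i$ gives $X_j-r_i\sim\mathrm{Uniform}[d-\tfrac12,\,d+\tfrac12]$, so $D_j\sim\mathrm{Uniform}[d-\tfrac12,\,d+\tfrac12]$; by the reflection symmetry of $x\mapsto|x-r_i|$, a spike in the sample $d$ steps to the left has the same law, so all $m$ variables in $C$ are i.i.d.\ $\mathrm{Uniform}[d-\tfrac12,\,d+\tfrac12]$ (covering also the case where the two closest samples straddle $i$ symmetrically and $m$ pools their counts). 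The lemma with $(\alpha,L)=(d-\tfrac12,1)$ gives $\hat f_S(i)=d-\tfrac12+\tfrac{1}{m+1}$, which is the second branch of Equation~\ref{eq:discrete_dist_calc}. The main obstacle is bookkeeping rather than depth: carefully arguing that spikes outside the closest non-empty sample(s) never attain the minimum (the boundary-overlap set being null), and verifying that left and right closest samples contribute identically distributed distances so that the single count $m$ is correct. The remaining work is a routine integral.
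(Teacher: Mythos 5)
Your proof is correct and follows the same two-step skeleton as the paper's: reduce the minimum over all $n$ spikes to the $m$ spikes in the closest non-empty sample(s), then compute the expected minimum of $m$ i.i.d.\ uniform distances. The one genuine difference is in how that expectation is evaluated: the paper singles out one spike, integrates $t\,\mathbb{P}(D_1=t)\prod_j \mathbb{P}(D_j>t)$ by parts, and multiplies by $m$, whereas you invoke the tail-integral identity $\mathbb{E}[\min_j Y_j]=\alpha+\int_\alpha^{\alpha+L}\mathbb{P}(\min_j Y_j>t)\,dt=\alpha+L/(m+1)$, which collapses both cases into one line with $(\alpha,L)=(0,\tfrac12)$ and $(d-\tfrac12,1)$. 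Your version buys a cleaner computation and, additionally, makes explicit two points the paper only asserts: that spikes outside the closest sample(s) almost surely never attain the minimum (since their distances are bounded below by $d+\tfrac12$), and that left- and right-hand closest samples contribute identically distributed distances so their counts can be pooled into a single $m$. One incidental note: your final answer $d-\tfrac12+\tfrac{1}{m+1}$ agrees with the proposition as stated; the paper's own proof ends its second case with $d-\tfrac12+\tfrac{1}{m(m+1)}$, which is a typo (multiplying the per-spike contribution $\tfrac{d-1/2}{m}+\tfrac{1}{m(m+1)}$ by $m$ gives $\tfrac{1}{m+1}$, not $\tfrac{1}{m(m+1)}$).
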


\begin{proof} First, the first case: there are $m>0$ spikes in sample $i$.
The presence of the $\min$ function in Equation \ref{eq:discrete_dist} means that we can ignore all spikes that are not in sample $i$, as they will not affect the expectation. Let $D_0, D_1, \ldots, D_{m-1}$ be the random variables representing the distance of the $m^{th}$ spike recorded within sample $i$. Single out one of these, $D_1$, and we will return later to include the others. $D_1$ will contribute to the expectation only when it is smaller than all others: if $D_1 = t$, then all $m-1$ other distances must be greater than $t$. The probability $\mathbb{P}(D_1 = t) = 2$ (uniform over interval of length $\frac{1}{2}$) and the probability of other $D_j$ having a value larger than $t$ is $\mathbb{P}(D_j > t) = 2(\frac{1}{2} - t)$.

The contribution to the expectation for $D_1$ then amounts to:

\begin{align*}
& \int_{0}^{\frac{1}{2}} t \cdot \mathbb{P}(D_1=t) \prod_{j=1}^{m-1} \mathbb{P}(D_j > t) \dd{t} \\
&=  \int_{0}^{\frac{1}{2}} t \cdot (2) \prod_{j=1}^{m-1} (2) (\frac{1}{2} - t)\dd{t} \\
&=  2 \int_{0}^{\frac{1}{2}} t (1 - 2t)^{m-1} \dd{t} \\
&= 2 \left[-\frac{t}{m}(1-2t)^m\right]_{0}^{\frac{1}{2}} +
\int_{0}^{\frac{1}{2}} \frac{1}{m}(1-2t)^m \dd{t} \\
&= 0 + \left[ - \frac{1}{2m(m+1)} (1-2t)^{m+1}
\right]_{0}^{\frac{1}{2}} \\
&= \frac{1}{2m(m+1)} \;.
\end{align*}
We singled out $D_1$; however, any of the $m$ spikes are equally likely to
be the closest spike. Summing the contribution over $m$ spikes gives us:
\[
\frac{1}{2(m+1)}
\]
as required by the first case.

What distinguished the second case is that the reference point $r_i$ is not in the same sample as the closest spike(s). Let $m > 0$ be the number of spikes that are in the closest one or two samples, situated $d$ samples from the sample of interest $i$. We ignore spikes from other samples. Let $D_0, D_1, ... D_{m-1}$ be the random variables representing the distance of the $m^{th}$ spike to the reference point $r_i$. Single out one of these, $D_1$. $D_1$ will contribute to the expectation only when it is smaller than all others: if $D_1 = t$, then all $m-1$ other distances must be greater than $t$. This time, $D_1$ has a uniform distribution over the interval $[d - \frac{1}{2}, d + \frac{1}{2}]$ of length $1$ with $\mathbb{P}(D_1 = t) = 1$ in this interval. The probability of other $D_j$ having a value larger than $t$ is $\mathbb{P}(D_j > t) = d + \frac{1}{2} - t$.

\begin{table*}[t]
    \centering
    \caption{The approximate spike distance compared to the discrete spike distance. Discrete spike distances are evaluated for the samples 0 to 8 when there are three spikes recorded: one in sample $2$ and two in sample $8$.}
    \vspace{0.3cm} 
    \renewcommand{\arraystretch}{1.7}
    \newcolumntype{M}{>{\centering\arraybackslash} p{0.40cm}}
    \begin{tabular}{|m{5cm}|M|M|M|M|M|M|M|M|M|}
        \hline
        sample index, $i$ & $0$ & $1$ & $2$ & $3$ & $4$ & $5$ & $6$ & $7$ & $8$ \\
        \hline
        spikes & & &  \cellcolor{gray!25} $|$ & & & & & & \cellcolor{gray!25} $| |$ \\
        \hline
        $m$ & $1$ & $1$ & $1$ & $1$ & $1$ & $3$ & $2$ & $2$ & $2$ \\
        \hline
        $d$ & 2 & 1 & $0$ & 1 & 2 & 3 & 2 & 1 &  $0$ \\
        \hline
        spike distance, $\hat{f}_S(i)$, in units of $T$ & $2$ & $1$ & 
        $\frac{1}{4}$ & $1$ & $2$ & $2 \frac{3}{4} $ & $1 \frac{5}{6} $ &
        $\frac{5}{6}$ & $\frac{1}{6}$ \\
        \hline
        approx. spike distance, in units of $T$ & $2$ & $1$ & 
        $0$ & $1$ & $2$ & $3 $ & $2 $ & $1$ & $0$ \\
        \hline
    \end{tabular}
    \label{tab:discrete_comparison}
\end{table*}

The contribution to the expectation for $D_1$ then amounts to:

\begin{align*}
& \int_{d-\frac{1}{2}}^{d+\frac{1}{2}} t \cdot \mathbb{P}(D_1=t) \prod_{j=1}^{m-1} \mathbb{P}(D_j > t) \dd{t} \\
&=  \int_{d-\frac{1}{2}}^{d+\frac{1}{2}} t \cdot (1) \prod_{j=1}^{m-1}
(d + \frac{1}{2} - t) \dd{t} \\ 
& \text{and by change of variable,} \\
&=  \int_{0}^{1} (d - \frac{1}{2} + t) \cdot (1) \prod_{j=1}^{m-1}
(1 - t)\dd{t} 
\\
              &=  \int_{0}^{1} (d - \frac{1}{2} + t )(1 - t)^{m-1} \dd{t} \\
              &= \left[-\frac{d - \frac{1}{2} + t}{m}(1-t)^m\right]_{0}^{1} +
                           \int_{0}^{1} \frac{1}{m}(1-t)^m \dd{t} \\
              &= \frac{d - \frac{1}{2}}{m} + \left[ - \frac{1}{m(m+1)} (1-t)^{m+1}
                           \right]_{0}^{1} \\
              &= \frac{d - \frac{1}{2}}{m} + \frac{1}{m(m+1)} \;.
\end{align*}

We singled out $D_1$; however, any of the $m$ spikes are equally likely to
be the closest spike. Summing the contribution over $m$ spikes gives us:
\[
d - \frac{1}{2} + \frac{1}{m(m+1)}
\]
as required by the second case.
\end{proof}

Table \ref{tab:discrete_comparison} compares the discrete spike distance described above to the approximate spike distance, including the values for $m$ and $d$ from Equation \ref{eq:discrete_dist_calc}.

\end{document}